\newtheorem{mydef}{Definition}
\newtheorem{myassump}{Assumption}
\newtheorem{mytheorem}{Theorem}
\newtheorem{mylemma}{Lemma}
\newtheorem{myremark}{Remark}
\newtheorem{mycorollary}{Corollary}
\newtheorem{myproposition}{Proposition}
\newtheorem{myproblem}{Problem}
\newtheorem*{mysystem}{System Model}
\newtheorem*{mymeas}{Measurement Model}
\newtheorem*{mysensorsch}{Sensor Scheduling Model}
\newcounter{ale}
\newenvironment{liste}{\begin{itemize}}{\end{itemize}}
\newcommand{\aliste}{\begin{liste} \setcounter{ale}{1}}
\newcommand{\zliste}{\end{liste}}
\newcommand{\myparagraph}[1]{\vspace{0.5ex}\noindent\textbf{#1:}}
\title{{\LARGE {\bf Scheduling Nonlinear Sensors for Stochastic Process Estimation}}}
\author{Vasileios Tzoumas{$^{\star}$}, Nikolay A.~Atanasov{$^{\star}$}, Ali Jadbabaie{$^{\dagger}$}, George J.~Pappas{$^{\star}$}
\thanks{$^{\star}$The authors are with the Department of Electrical and Systems Engineering, University of Pennsylvania, Philadelphia, PA 19104-6228 USA (email: {\fontsize{8}{8}\selectfont\ttfamily\upshape \{vtzoumas, atanasov, pappasg\}@seas.upenn.edu}).}
\thanks{$^{\dagger}$The author is with the Department of Civil and Environmental Engineering, Massachusetts Institute of Technology, Cambridge, MA, 02139 USA (email: {\fontsize{8}{8}\selectfont\ttfamily\upshape jadbabai@mit.edu}).}
\thanks{This work was supported in part by TerraSwarm, one of six centers of STARnet, a Semiconductor Research Corporation program sponsored by MARCO and DARPA, in part by AFOSR Complex Networks Program and in part by AFOSR MURI CHASE.}
}
\begin{document}
\maketitle


\begin{abstract}
In this paper, we focus on activating only a few sensors, among many available, to estimate the state of a stochastic process of interest. This problem is important in applications such as target tracking and simultaneous localization and mapping (SLAM). It is challenging since it involves stochastic systems whose evolution is largely unknown, sensors with nonlinear measurements, and limited operational resources that constrain the number of active sensors at each measurement step. We provide an algorithm applicable to general stochastic processes and nonlinear measurements whose time complexity is linear in the planning horizon and whose performance is a multiplicative factor $1/2$ away from the optimal performance. This is notable because the algorithm offers a significant computational advantage over the polynomial-time algorithm that achieves the best approximation factor $1/e$. In addition, for important classes of Gaussian processes and nonlinear measurements corrupted with Gaussian noise, our algorithm enjoys the same time complexity as even the state-of-the-art algorithms for linear systems and measurements. We achieve our results by proving two properties for the entropy of the batch state vector conditioned on the measurements: a) it is supermodular in the choice of the sensors; b) it has a sparsity pattern (involves block tri-diagonal matrices) that facilitates its evaluation at each~sensor~set.

\end{abstract}

\section{Introduction}\label{sec:Intro}


Adversarial target tracking and capturing~\cite{masazade2012sparsity, karnad2015robot}, robotic navigation and autonomous construction~\cite{vitus2011sensor}, active perception and simultaneous localization and mapping (SLAM) \cite{kaess2008isam} are only a few of the challenging information gathering problems that benefit from the monitoring capabilities of sensor networks~\cite{rowaihy2007survey}. These problems are challenging because:
\begin{itemize}
\item they involve systems whose evolution is largely unknown, modeled either as a stochastic process, such as a Gaussian process~\cite{karlin2014first}, or as linear or nonlinear system corrupted with process noise~\cite{masazade2012sparsity},

\item they involve nonlinear sensors (e.g., cameras, radios) corrupted with noise~\cite{kailath2000linear}, 

\item they involve systems that change over time~\cite{nowak2004estimating}, and as a result, necessitate both spatial and temporal deployment of sensors in the environment, increasing the total number of needed sensors, and at~the~same~time,

\item they involve operational constraints, such as limited communication bandwidth and battery life, which limit the number of sensors that can simultaneously be active in the information gathering process~\cite{hero2011sensor}.
\end{itemize}

Due to these challenges, we focus on the following question: ``How do we select, at each time, only a few of the available sensors so as to monitor effectively a system despite the above challenges?''  In particular, we focus on the following sensor scheduling problem:

\newtheorem{problem1}{Problem}
\begin{problem1}
Consider a stochastic process, whose realization at time $t$ is denoted by $x(t)$ and a set of $m$ sensors, whose measurements are nonlinear functions of $x(t)$, evaluated at a fixed set of $K$ measurement times $t_1,t_2,\ldots, t_K$. In addition, suppose that at each $t_k$ a set of at most $s_k \leq m$ sensors can be used. Select the sensor sets so that the error of the corresponding minimum mean square error estimator of $(x(t_1),x(t_2),\ldots, x(t_K))$ is minimal among all possible sensor sets.
\end{problem1}

There are two classes of sensor scheduling algorithms, that trade-off between the estimation accuracy of the batch state vector and their time complexity~\cite{2016arXiv160807533T}: those used for Kalman filtering, and those for batch state estimation. The most relevant papers on batch state estimation are~\cite{2016arXiv160807533T} and~\cite{Roy2016369}.  However, both of these papers focus on linear systems and measurements. The most relevant papers for Kalman filtering consider algorithms that use: myopic heuristics~\cite{shamaiah2010greedy}, tree pruning~\cite{vitus2012efficient}, convex optimization~\cite{joshi2009sensor,ny2011scheduling, shen2014sensor,liu2014optimal}, quadratic programming~\cite{mo2011sensor}, Monte Carlo methods \cite{he2006sensor}, or submodular function maximization \cite{zhang2015Sensor,jawaid2015submodularity}.  However, these papers focus similarly on linear or nonlinear systems and measurements, and do not consider unknown dynamics.

\myparagraph{Main contributions}
\begin{enumerate}[label=\arabic*),nosep,leftmargin=*]
\item We prove that Problem 1 is NP-hard.
\item We prove that the best approximation factor one can achieve in polynomial time for Problem 1 is $1/e$.
\item We provide Algorithm~\ref{alg:1} for Problem 1 that:
\begin{itemize}[nosep,leftmargin=*]
\item for all stochastic processes and nonlinear measurements, achieves a solution that is up to a multiplicative factor $1/2$ from the optimal solution with time complexity that is only linear in the planning horizon $K$. This is important, since it implies that Algorithm~\ref{alg:1} offers a significant computational advantage with negligible loss in performance over the polynomial-time algorithm that achieves the best approximation factor of $1/e$,
\item for important classes of Gaussian processes, and nonlinear measurements corrupted with Gaussian noise, has the same time complexity as even state-of-the-art algorithms for linear systems and measurements. For example, for Gaussian process such as those in target tracking, or those generated by linear or nonlinear systems corrupted with Gaussian noise, Algorithm~\ref{alg:1} has the same time complexity as the batch state estimation algorithm in~\cite{2016arXiv160807533T}, and lower than the Kalman filter scheduling algorithms in~\cite{joshi2009sensor,liu2014optimal}.
\end{itemize}
Therefore, Algorithm~\ref{alg:general} can enjoy both the estimation accuracy of the batch state scheduling algorithms (compared to the Kalman filtering approach, that only approximates the batch state estimation error with an upper bound~\cite{2016arXiv160807533T}) and, surprisingly, even the low time complexity of the Kalman filtering scheduling algorithms for linear systems.
\end{enumerate}

\myparagraph{Technical contributions}

\begin{enumerate}[label=\arabic*),nosep,leftmargin=*,wide]
\item \textit{Supermodularity in Problem 1}: We achieve the approximation performance of Algorithm~\ref{alg:1}, and the linear dependence of its time complexity on the planning horizon, by proving that our estimation metric is a supermodular function in the choice of the utilized sensors.  This is important, since this is in contrast to the case of multi-step Kalman filtering for linear systems and measurements, where the corresponding estimation metric is neither supermodular nor submodular~\cite{zhang2015Sensor}~\cite{jawaid2015submodularity}. Moreover, our submodularity result cannot be reduced to the batch estimation problems in~\cite{ko1995exact,krause2008near}.  The reasons are twofold: i) we consider sensors that measure nonlinear combinations of the elements of $x(t)$, in contrast to~\cite{ko1995exact,krause2008near}, where each sensor measures directly only one element of $x(t)$; ii) our estimation metric is relevant to monitoring dynamical systems and different to the submodular entropy metric and information gain considered in~\cite{ko1995exact} and~\cite{krause2008near}, respectively.

\item \textit{Sparsity in Problem 1}: We achieve the reduced time complexity of Algorithm~\ref{alg:1} for Gaussian processes by identifying a sparsity pattern in our estimation metric.  Specifically, in this case the time complexity of each evaluation of our metric is decided by the sparsity pattern of either the covariance of $(x(t_1),x(t_2), \ldots, x(t_K))$, or the inverse of this covariance.  This is important since the two matrices are not usually sparse at the same time, even if one of them is~\cite{anderson2015batch}.  
E.g., for Gaussian processes such as those in target tracking, the first matrix is block tri-diagonal, whereas for those in SLAM, or those generated by linear or nonlinear systems corrupted with Gaussian noise, the second matrix is block tri-diagonal.
\end{enumerate}

\myparagraph{Notation} We denote the set of natural numbers $\{1,2,\ldots\}$ by $\mathbb{N}$, the set of real numbers by  $\mathbb{R}$,  and the set $\{1, 2, \ldots, n\}$ by $[n]$ ($n \in \mathbb{N}$).  The set of real numbers between $0$ and $1$ is denoted by $[0,1]$, and the empty set by $\emptyset$.  Given a set $\mathcal{X}$, $|\mathcal{X}|$ is its cardinality.  
In addition, for $n \in \mathbb{N}$, $\mathcal{X}^n$ is the $n$-times Cartesian product $\mathcal{X} \times \mathcal{X} \times\cdots \times\mathcal{X}$. 
Matrices are represented by capital letters and vectors by lower-case letters.  We write $A \in \mathcal{X}^{n_1 \times n_2}$ ($n_1, n_2 \in \mathbb{N}$) to denote a matrix of $n_1$ rows and $n_2$ columns whose elements take values in $\mathcal{X}$; $A^\top$ is its transpose, and $[A]_{ij}$ is its element at the $i$-th row and $j$-th column;
$\det(A)$ is its determinant.  Furthermore, if $A$ is positive definite, we write $A\succ {0}$.  In the latter case, $A^{-1}$ is its inverse.
$I$ is the identity matrix; its dimension is inferred from the context.  Similarly for the zero matrix $0$.
The $\equiv$ denotes equivalence.
Moreover,  for a probability space $(\Omega, \mathcal{F},\mathbb{P})$, $\Omega$ is the sample space, $\mathcal{F}$ the $\sigma$-field, and $\mathbb{P}:\mathcal{F}\mapsto [0,1]$ the function that assigns probabilities to events in $\mathcal{F}$~\cite{durrett2010probability}.  We write $x \sim \mathcal{F}$ to denote a random variable $x$ with probability distribution $\mathcal{F}$; $\mathbb{E}(x)$ is its expected value, and $\Sigma(x)$ its covariance.  $x \sim \mathcal{N}(\mu, \Sigma)$ denotes a Gaussian random variable $x$ with mean $\mu$ and covariance $\Sigma$; we equivalently write $x \sim \mathcal{N}(\mathbb{E}(x), \Sigma(x))$.  Finally, we write $x|y \sim \mathcal{G}$ to denote that $x$'s probability distribution given $y$ is $\mathcal{G}$.


\section{Problem Formulation} \label{sec:Prelim}

This section introduces the system, measurement, and scheduling models and presents the sensor scheduling problem formally.
\begin{mysystem}
We consider two cases:
\begin{itemize}
\item \emph{Continuous time model}: Consider the stochastic process (along with a probability space $(\Omega, \mathcal{F},\mathbb{P})$):
\begin{equation}\label{eq:dynamics_general}
x_{\omega}(t): \omega \in \Omega, t \geq t_0 \mapsto \mathbb{R}^n
\end{equation}
where $n \in \mathbb{N}$, $t_0$ is the initial time, and $x_{\omega}(t)$ the state vector given the sample $\omega$.

\item \emph{Discrete time model}: Consider the nonlinear discrete-time system:
\begin{equation}\label{eq:dynamics_general_discrete}
x_{k+1}=l_k(x_{1:k}), l_k \sim \mathcal{L}_k,  k \in \mathbb{N}
\end{equation}
where $x_k \in \mathbb{R}^n$ is the state vector, $x_{1:k}$ the batch vector $(x_1,x_2, \ldots, x_k)$, and $\mathcal{L}_k$ a probability distribution over functions $l_k:\mathbb{R}^{nk}\mapsto \mathbb{R}^{n}$.
\end{itemize}
\end{mysystem}

Because the system models~\eqref{eq:dynamics_general} and~\eqref{eq:dynamics_general_discrete} assume no characteristic structure, they are appropriate for modeling largely unknown dynamics.  For example, an instance of~\eqref{eq:dynamics_general} is the time-indexed Gaussian process system model:
\begin{equation}\label{eq:dynamics_gaussian}
x(t) \sim \mathcal{GP}(\mu(t), \Sigma(t,t')),\quad t, t' \geq t_0,
\end{equation}
where $\mu(t)$ is the mean function and $\Sigma(t,t')$ is the covariance function. Similarly, an instance of~\eqref{eq:dynamics_general_discrete} is the state-indexed Gaussian process system model:
\begin{equation}\label{eq:dynamics_gaussian_discrete}
x_{k+1}=l(x_k), \quad l \sim \mathcal{GP}(\mu(x), \Sigma(x,x')), x,x' \in \mathbb{R}^n.
\end{equation}


\begin{mymeas}
Consider $m$ nonlinear sensors that operate in discrete time:
\begin{equation}\label{eq:nonlinear_measurements}
z_{i,k} = g_i(x_k)+v_{i,k}, \quad i \in [m], k \in \mathbb{N}
\end{equation}
where for the continuous-time system in~\eqref{eq:dynamics_general} we let $x_k := x(t_k)$ at a pre-specified set of measurement times $t_1,t_2,\ldots$ and $v_{i,k}$ is the measurement noise of sensor $i$ at time $k$.

\end{mymeas}

\begin{myassump}\label{ass:indep}
$v_{i,k}$ are independent across $i$ and $k$. In addition, $g_i$ is one-time differentiable.
\end{myassump}



\begin{mysensorsch}
The $m$ sensors in~\eqref{eq:nonlinear_measurements} are used at $K$ scheduled measurement times $\{t_1,t_2,\ldots,t_K\}$. At each $k \in [K]$, only $s_k$ of the $m$ sensors are used ($s_k \leq m$), resulting in the batch measurement vector~$y_k$:
\begin{equation}\label{eq:measurements}
y_k = S_kz_k, \quad k \in [K],
\end{equation}
where $S_k$ is a sensor selection matrix, composed of sub-matrices $[S_k]_{ij}$ ($i \in [s_k]$, $j \in [m]$) such that $[S_k]_{ij}=I$ if sensor $j$ is used at time $k$, and $[S_k]_{ij}=0$ otherwise. We assume that a sensor can be used at most once at each $k$, and as a result, for each $i$ there is one $j$ such that $[S_k]_{ij}=I$ while for each $j$ there is at most one $i$ such that $[S_k]_{ij}=I$.
\end{mysensorsch}


We now present the sensor scheduling problem formally: 
\paragraph*{Notation} For $i,j \in \mathbb{N}$, $\phi_{i:j} \equiv \left(\phi_i, \phi_{i+1}, \ldots, \phi_j\right)$. In addition, $\mathcal{S}_k\equiv \{j :  \text{there exists } i \in [s_k], [S_k]_{ij}=I\}$: $\mathcal{S}_k$ is the set of indices that correspond to utilized sensors at~$t_k$.

\begin{myproblem}[Sensor Scheduling in Stochastic Processes with Nonlinear Observations]
\label{pr:problem}
Select at each time $k$ a subset of $s_k$ sensors, out of the $m$ sensors in~\eqref{eq:nonlinear_measurements}, to use in order to minimize the conditional entropy of $x_{1:K}$  given the measurements $y_{1:K}$:
\begin{equation*}
\begin{aligned}
& \underset{\mathcal{S}_k \subseteq [m], k \in [K]}{\text{minimize}} 
 \;  \mathbb{H}(x_{1:K}|\mathcal{S}_{1:K}) \\
&\hspace{3mm}\text{subject to} 
\quad \hspace{-1.5mm} |\mathcal{S}_k| \leq s_k, k \in [K],
\end{aligned}
\end{equation*}
where $\mathbb{H}(x_{1:K}|\mathcal{S}_{1:K})$ denotes the conditional entropy $\mathbb{H}(x_{1:K}|y_{1:K})$ of $x_{1:K}$ given the measurements $y_{1:K}$.
\end{myproblem}


The conditional entropy $\mathbb{H}(x_{1:K}|y_{1:K})$ captures the estimation accuracy of $x_{1:K}$ given $y_{1:K}$, as we explain in the following two propositions:

\begin{myproposition}\label{prop:mutual}
$\mathbb{H}(x_{1:K}|y_{1:K})$ is a constant factor away from the mutual information of $x_{1:K}$ and $y_{1:K}$.  In particular:
\begin{equation*}\label{eq:mutual_lin_gau}
\mathbb{H}(x_{1:K}|y_{1:K})=-\mathbb{I}(x_{1:K}; y_{1:K})+\mathbb{H}(x_{1:K}),
\end{equation*} 
where $\mathbb{I}(x_{1:K}; y_{1:K})$ is the mutual information of $x_{1:K}$ and $y_{1:K}$, and $\mathbb{H}(x_{1:K})$ is constant.
\end{myproposition}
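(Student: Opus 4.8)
The plan is to invoke the standard information-theoretic identity relating conditional entropy and mutual information, and then to argue that the marginal (prior) entropy term is unaffected by the sensor schedule. Concretely, for any pair of random variables (or random vectors) $x_{1:K}$ and $y_{1:K}$ admitting a joint density, the mutual information satisfies
\begin{equation*}
\mathbb{I}(x_{1:K};y_{1:K}) \;=\; \mathbb{H}(x_{1:K}) - \mathbb{H}(x_{1:K}\mid y_{1:K}),
\end{equation*}
which is immediate from the definitions: writing $\mathbb{I}(x_{1:K};y_{1:K}) = \mathbb{E}\big[\log \tfrac{p(x_{1:K},y_{1:K})}{p(x_{1:K})p(y_{1:K})}\big]$ and splitting the logarithm of the joint density via $p(x_{1:K},y_{1:K}) = p(x_{1:K}\mid y_{1:K})\,p(y_{1:K})$ yields $\mathbb{E}[\log p(x_{1:K}\mid y_{1:K})] - \mathbb{E}[\log p(x_{1:K})] = -\mathbb{H}(x_{1:K}\mid y_{1:K}) + \mathbb{H}(x_{1:K})$. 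Rearranging gives exactly the claimed equality $\mathbb{H}(x_{1:K}\mid y_{1:K}) = -\mathbb{I}(x_{1:K};y_{1:K}) + \mathbb{H}(x_{1:K})$. A small bookkeeping point is that one must confirm the relevant densities exist and the integrals are well defined; for the Gaussian instances in~\eqref{eq:dynamics_gaussian}--\eqref{eq:dynamics_gaussian_discrete} this is automatic, and in general it is part of the standing modeling assumptions, so I would simply note it.

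The substantive half of the statement is that $\mathbb{H}(x_{1:K})$ is \emph{constant}, i.e., independent of the optimization variables $\mathcal{S}_{1:K}$. I would make this explicit by tracing the dependence: the law of $x_{1:K}$ is fully determined by the System Model~\eqref{eq:dynamics_general} or~\eqref{eq:dynamics_general_discrete}, which contains no reference to the sensors; the sensor selection matrices $S_k$ (equivalently the index sets $\mathcal{S}_k$) enter only through the Measurement Model~\eqref{eq:nonlinear_measurements}--\eqref{eq:measurements}, which defines $y_{1:K}$ as a function of $x_{1:K}$ and the measurement noise. Hence the marginal distribution of $x_{1:K}$, and therefore its differential entropy $\mathbb{H}(x_{1:K}) = -\mathbb{E}[\log p(x_{1:K})]$, is the same for every feasible schedule. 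In particular, minimizing $\mathbb{H}(x_{1:K}\mid \mathcal{S}_{1:K})$ over $\mathcal{S}_{1:K}$ is equivalent to maximizing $\mathbb{I}(x_{1:K};y_{1:K})$ over $\mathcal{S}_{1:K}$, which is the interpretation the proposition is meant to license.

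I do not anticipate a genuine obstacle here; the result is essentially a restatement of the definition of mutual information together with an observation about which quantities the decision variables influence. The only place needing care is a clean justification that the prior entropy term does not implicitly depend on $\mathcal{S}_{1:K}$ and that it is finite (so that the subtraction in the identity is meaningful rather than an $\infty-\infty$ form); both follow from the model structure and can be dispatched in a sentence each. I would therefore keep the proof short, stating the identity, proving it from definitions in one line, and then emphasizing the constancy of $\mathbb{H}(x_{1:K})$ as the conceptually relevant point.
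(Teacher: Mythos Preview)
Your proposal is correct; the identity $\mathbb{I}(x_{1:K};y_{1:K})=\mathbb{H}(x_{1:K})-\mathbb{H}(x_{1:K}\mid y_{1:K})$ is the standard definition of mutual information, and the constancy of $\mathbb{H}(x_{1:K})$ follows because the system model fixes the law of $x_{1:K}$ independently of the sensor schedule. The paper itself provides no proof for this proposition, treating it as an elementary information-theoretic fact, so your argument is exactly what is needed.
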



\begin{myproposition}\label{prop:entropy} Consider the Gaussian process~\eqref{eq:dynamics_gaussian} and suppose that the measurement noise in~\eqref{eq:nonlinear_measurements} is Gaussian, $v_{i,k} \sim \mathcal{N}(0,\Sigma(v_{i,k}))$.
$\mathbb{H}(x_{1:K}|y_{1:K})$ is a constant factor away from $\log\det(\Sigma(x^{\star}_{1:K}))$, where $\Sigma(x^{\star}_{1:K})$ is the error covariance of the minimum mean square estimator $x^{\star}_{1:K}$ of $x_{1:K}$ given the measurements $y_{1:K}$.  In particular:\footnote{We explain $x^{\star}_{1:K}$ and  $\log\det(\Sigma(x^{\star}_{1:K}))$:  $x^{\star}_{1:K}$ is the optimal estimator for $x_{1:K}$, since it minimizes among \emph{all} estimators of $x_{1:K}$ the mean square error $\mathbb{E}(\|x_{1:K}-x^{\star}_{1:K}\|_2^2)$ ($\|\cdot\|_2$ is the euclidean norm), where the expectation is taken with respect to $y_{1:K}$~\cite{kailath2000linear}. $\log\det(\Sigma(x^{\star}_{1:K}))$ is an estimation error metric related to $\|x_{1:K}-x^{\star}_{1:K}\|_2^2$, since when it is minimized, the probability that the estimation error $\|x_{1:K}-x^{\star}_{1:K}\|_2^2$ is small is maximized~\cite{2016arXiv160807533T}.}
\begin{equation*}\label{eq:entropy_lin_gau}
\mathbb{H}(x_{1:K}|y_{1:K})=\frac{\log\det(\Sigma(x^{\star}_{1:K}))}{2}+\frac{nK\log(2\pi e)}{2}.
\end{equation*}
\end{myproposition}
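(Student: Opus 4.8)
The plan is to reduce the statement to the classical closed-form expression for the differential entropy of a Gaussian random vector, combined with the fact that, for the model at hand, the posterior $x_{1:K}\,|\,y_{1:K}$ is Gaussian with a covariance matrix that does not depend on the realized measurement. First I would recall the elementary fact that if $w\sim\mathcal{N}(\mu,\Sigma)$ with $w\in\mathbb{R}^N$, then $\mathbb{H}(w)=\tfrac12\log\det(\Sigma)+\tfrac{N}{2}\log(2\pi e)$; I would simply cite this. The target identity has exactly this shape with $N=nK$, the dimension of the batch vector $x_{1:K}$, so the whole proof amounts to correctly identifying the covariance appearing there as $\Sigma(x^{\star}_{1:K})$.

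Next I would make the joint Gaussianity explicit. The prior on $x_{1:K}$ is Gaussian, being a finite-dimensional marginal of the Gaussian process~\eqref{eq:dynamics_gaussian}; each measurement is an affine-in-state function of $x_k$ — here I would invoke the first-order linearization of $g_i$ underlying the batch estimator, which is precisely why Assumption~\ref{ass:indep} requires $g_i$ to be one-time differentiable — corrupted by additive, independent Gaussian noise $v_{i,k}$. Hence the pair $(x_{1:K},y_{1:K})$ is jointly Gaussian, so $x_{1:K}\,|\,\{y_{1:K}=y\}$ is Gaussian for every $y$, with mean affine in $y$ and, crucially, with a covariance matrix $P$ that is the same for all $y$.

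Then I would tie $P$ to the MMSE estimator. The minimum mean square error estimator of $x_{1:K}$ given $y_{1:K}$ is the conditional mean $x^{\star}_{1:K}=\mathbb{E}(x_{1:K}\,|\,y_{1:K})$, and by the law of total covariance its error covariance satisfies $\Sigma(x^{\star}_{1:K})=\mathbb{E}_{y_{1:K}}\!\big[\mathrm{Cov}(x_{1:K}\,|\,y_{1:K})\big]=\mathbb{E}_{y_{1:K}}[P]=P$, the last step using that the conditional covariance is the constant matrix $P$. Finally, writing the conditional entropy as the average of pointwise differential entropies, $\mathbb{H}(x_{1:K}\,|\,y_{1:K})=\mathbb{E}_{y_{1:K}}\!\big[\mathbb{H}(x_{1:K}\,|\,y_{1:K}=y)\big]$, and substituting the Gaussian entropy formula from the first step with $N=nK$ and covariance $P=\Sigma(x^{\star}_{1:K})$, the expectation is over a constant and the claimed equality follows. (Alternatively one could route through Proposition~\ref{prop:mutual} and the Gaussian mutual-information formula, but the direct computation is cleaner.)

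The step I expect to be the main obstacle — or at least the one demanding the most care — is the joint-Gaussianity claim: the sensors are genuinely nonlinear, so the identity does not hold literally for the exact posterior, and one must be explicit that $\Sigma(x^{\star}_{1:K})$ denotes the error covariance produced by the linearized batch estimator, equivalently the inverse of the linearized batch information matrix. Once that modeling convention is fixed, the remainder is bookkeeping: that conditional differential entropy is the $y$-average of pointwise differential entropies, and that this average collapses to a single term because joint Gaussianity renders the conditional covariance — hence the pointwise entropy — independent of the conditioning value.
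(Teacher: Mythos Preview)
Your proposal is correct and follows essentially the same route as the paper: establish joint Gaussianity of $(x_{1:K},y_{1:K})$ (under the linearized measurement model), conclude that the posterior is Gaussian with covariance equal to the MMSE error covariance $\Sigma(x^{\star}_{1:K})$ independent of the realized $y_{1:K}$, and then apply the closed-form Gaussian entropy formula inside the defining expectation $\mathbb{H}(x_{1:K}|y_{1:K})=\mathbb{E}_{y_{1:K}}[\mathbb{H}(x_{1:K}|y_{1:K}=y)]$. Your explicit flagging of the linearization caveat is, if anything, more careful than the paper's own proof, which simply asserts joint Gaussianity ``due to Assumption~\ref{ass:indep}'' and only later (in Lemma~\ref{lem:entropy_closed_inv_sigma}) makes the linearization step explicit.
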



\section{Main Results}\label{sec:main}

We first prove that Problem 1 is NP-hard, and then derive for it a provably near-optimal approximation algorithm:

\begin{mytheorem}\label{th:np}
\textit{The problem of sensor scheduling in stochastic processes with nonlinear observations (Problem~1) is NP hard.}
\end{mytheorem}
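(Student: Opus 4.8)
The plan is to prove Theorem~\ref{th:np} by a polynomial-time reduction from a classical NP-complete problem to a very restricted special case of Problem~\ref{pr:problem}: the single-measurement-time case $K=1$ with a Gaussian prior on $x_1$ and linear sensors corrupted by independent Gaussian noise. This is a bona fide instance of Problem~\ref{pr:problem} --- linear maps are one-time differentiable and the noises can be taken independent, so Assumption~\ref{ass:indep} holds, and \eqref{eq:dynamics_gaussian} subsumes a single-time Gaussian $x_1\sim\mathcal{N}(0,cI)$ --- and its virtue is that, by Proposition~\ref{prop:entropy}, the objective $\mathbb{H}(x_1\mid\mathcal{S}_1)$ is a strictly increasing affine function of $\log\det\Sigma(x_1^\star)$, the log-determinant of the posterior error covariance. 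Minimizing the conditional entropy over sensor sets of size $s_1$ thus becomes a transparent combinatorial $\log\det$ minimization that we can attack directly.

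Concretely, I would reduce from \textsc{Set Cover}: given a universe $[N]$, subsets $T_1,\dots,T_m\subseteq[N]$, and an integer $k$, decide whether $k$ of the subsets cover $[N]$. Construct the Problem~\ref{pr:problem} instance with state dimension $n=N$, horizon $K=1$, budget $s_1=k$, prior $x_1\sim\mathcal{N}(0,cI)$ with $c$ to be fixed, and sensor $j$ returning the sub-vector of $x_1$ indexed by $T_j$ under unit-variance independent noise (so $g_j$ is the coordinate projection onto $T_j$). For a selection $\mathcal{S}\subseteq[m]$, the posterior precision of $x_1$ is $\tfrac1c I+\operatorname{diag}(d_1,\dots,d_N)$ with $d_u=\lvert\{j\in\mathcal{S}: u\in T_j\}\rvert$, so up to the affine reparametrization of Proposition~\ref{prop:entropy} the objective equals $-\sum_{u=1}^{N}\log(\tfrac1c+d_u)$. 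Every cover has all $d_u\ge 1$ and hence objective strictly negative, whereas every non-cover leaves some $d_u=0$ and hence has objective at least $\log c-(N-1)\log(k+1)$. Choosing $c=(k+1)^{N}$ --- an integer of $O(N\log m)$ bits, so the whole construction is polynomial --- makes the objective of every non-cover strictly positive; therefore the \textsc{Set Cover} instance is a YES-instance if and only if the optimal value of the constructed instance of Problem~\ref{pr:problem} is negative. Hence even this special case, and therefore Problem~\ref{pr:problem} itself, is NP-hard.

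The two points that need care are (i) certifying that the single-time Gaussian/linear case genuinely lies inside Problem~\ref{pr:problem}, so that hardness transfers --- this is exactly where Proposition~\ref{prop:entropy} does the work, by turning the information-theoretic objective into a log-determinant --- and (ii) calibrating $c$ and the decision threshold $0$ so that entropy minimization provably prefers to cover every coordinate, i.e.\ bounding the gap between the best covering value and the best non-covering value and checking that it stays bounded away from zero for a $c$ with polynomially many bits. I do not expect a deeper obstacle than this bookkeeping. If one instead insists that each physical sensor yield a scalar measurement, the construction degenerates to choosing $k$ of the $N$ coordinates of a Gaussian vector to observe, and one can fall back on the known NP-hardness of maximum-entropy sampling~\cite{ko1995exact}, at the price of a less self-contained argument; I would prefer the \textsc{Set Cover} reduction above. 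Replacing $\mathcal{N}(0,cI)$ by an improper prior, or \textsc{Set Cover} by \textsc{Exact Cover by 3-Sets}, only streamlines the threshold computation.
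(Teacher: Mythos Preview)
Your reduction is correct, and it is genuinely different from the paper's. The paper does not reduce from \textsc{Set Cover}; it reduces from the \emph{minimal observability problem} of~\cite{olshevsky2014minimal,sergio2015minimal}. Specifically, it takes $K=1$, a degenerate prior $x(t_0)\sim\mathcal{N}(c,0)$ with $c$ an unknown constant, deterministic linear dynamics $\mu(t)=e^{A_{\text{NP-h}}(t-t_0)}x(t_0)$ with $\Sigma(t,t')=0$, $m=n$ scalar coordinate sensors $g_i(x)=e_i^\top x$, \emph{zero} measurement noise, and $s_1=r$; it then asserts that this instance of Problem~\ref{pr:problem} is equivalent to deciding whether $r$ coordinate outputs render $(A_{\text{NP-h}},C)$ observable.

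What each route buys: the paper's argument is a one-line pointer to an existing hardness result and ties the theorem to observability, which is thematically natural here; but it leans on a degenerate instance (zero prior covariance, zero noise) in which the differential entropy in Proposition~\ref{prop:entropy} is not literally finite, so the ``equivalence'' is asserted rather than computed. Your \textsc{Set Cover} reduction stays strictly inside the nondegenerate Gaussian regime where Proposition~\ref{prop:entropy} applies verbatim, produces an explicit posterior precision $\tfrac{1}{c}I+\operatorname{diag}(d_1,\dots,d_N)$, and yields a clean numerical threshold ($0$) separating covers from non-covers after calibrating $c=(k+1)^N$; it is self-contained and does not import the hardness of minimal observability. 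The only point to tidy is your aside about scalar sensors: with the isotropic prior $cI$ and single-coordinate sensors the problem is trivially symmetric, so the fallback to~\cite{ko1995exact} indeed requires replacing $cI$ by a general covariance, as maximum-entropy sampling does --- but this does not affect your main argument, which uses vector-valued sensors and is fully consistent with the measurement model~\eqref{eq:nonlinear_measurements}--\eqref{eq:measurements}.
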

\begin{proof}
See Appendix~\ref{app:np}. Our approach is to find an instance of Problem 1 that is equivalent to the NP-hard minimal observability problem introduced in~\cite{olshevsky2014minimal,sergio2015minimal}.
\end{proof}

Due to Theorem~\ref{th:np}, we need to appeal to approximation algorithms to obtain a solution to Problem 1 in polynomial-time. To this end, we propose an efficient near-optimal algorithm (Algorithm~\ref{alg:general} with a subroutine in Algorithm~\ref{alg:greedy_alg}) and quantify its performance and time complexity in the following theorem.


\begin{algorithm}[tl]
\caption{Approximation algorithm for Problem 1.}
\label{alg:1}
\begin{algorithmic}
\REQUIRE  Horizon $K$, scheduling constraints $s_1, s_2, \ldots, s_K$, error metric $\mathbb{H}(x_{1:K}|\mathcal{S}_{1:K}): \mathcal{S}_k \subseteq [m], k \in [K] \mapsto \mathbb{R}$
\ENSURE Sensor sets $(\mathcal{S}_1, \mathcal{S}_2,\ldots, \mathcal{S}_K)$ that approximate the solution to Problem 1, as quantified in Theorem~\ref{th:alg_performance}
\STATE $k \leftarrow 1$, $\mathcal{S}_{1:0} \leftarrow \emptyset$
\WHILE {$k\leq K$} \STATE{ 
\vspace{-4.9mm}
	\begin{enumerate}[label=\arabic*.]
    \item Apply Algorithm~\ref{alg:greedy_alg} to
    \begin{equation}\label{eq:local_opt}
    \min_{S \subseteq [m]}\{\mathbb{H}(x_{1:K}|\mathcal{S}_{1:k-1},\mathcal{S}): |\mathcal{S}|\leq s_k\}
    \end{equation}
    \item Denote by $\mathcal{S}_k$ the solution Algorithm~\ref{alg:greedy_alg} returns
	\item $\mathcal{S}_{1:k} \leftarrow (\mathcal{S}_{1:k-1},\mathcal{S}_k)$
	\item $k \leftarrow k+1$
	\end{enumerate}	
	\vspace{-1.5mm}
	}
\ENDWHILE
\end{algorithmic} \label{alg:general}
\end{algorithm}

\begin{mytheorem}\label{th:alg_performance}
The theorem has two parts: 
%
\begin{enumerate}[label=\arabic*)]
\item \emph{Approximation performance of Algorithm~\ref{alg:general}:} Algorithm~\ref{alg:general} returns sensors sets $\mathcal{S}_1, \mathcal{S}_2, \ldots, \mathcal{S}_K$ that:  
\begin{enumerate}[label=\roman*.]
\item satisfy all the feasibility constraints of Problem 1: $|\mathcal{S}_k|\leq s_k, k \in [K]$
\item achieve an error $\mathbb{H}(x_{1:K}|\mathcal{S}_{1:K})$ such that:
\begin{equation}
\frac{\mathbb{H}(x_{1:K}|\mathcal{S}_{1:K})-OPT}{MAX-OPT}\leq \frac{1}{2}, \label{ineq:opt_guar}
\end{equation}
\end{enumerate}
where $OPT$ is the optimal cost of Problem 1, and $MAX \equiv \max_{\mathcal{S}'_{1:K}}\mathbb{H}(x_{1:K}|\mathcal{S}'_{1:K})$ is the maximum (worst) cost in Problem 1.

\item \emph{Time complexity of Algorithm~\ref{alg:general}:} 
Algorithm~\ref{alg:general} has time complexity $O(\sum_{k=1}^K s_k^2T)$, where $T$ is the time complexity of evaluating $\mathbb{H}(x_{1:K}|\mathcal{S}_{1:K}'): \mathcal{S}_k' \subseteq [m],$ $k \in [K]\mapsto \mathbb{R}$ at an $\mathcal{S}_{1:K}'$.  
\end{enumerate}
\end{mytheorem}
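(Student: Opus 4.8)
The plan is to recast Problem~1 as the \emph{maximization} of a monotone, normalized, submodular set function under a partition-matroid constraint, and then to read off both parts of the theorem from the behaviour of the greedy scheme that Algorithm~\ref{alg:general} implements (with Algorithm~\ref{alg:greedy_alg} as its inner loop). I would introduce $f(\mathcal{S}_{1:K}) := MAX - \mathbb{H}(x_{1:K}\mid\mathcal{S}_{1:K})$ on the ground set $\mathcal{V}:=[m]\times[K]$ of sensor--time pairs, and observe that the feasibility region $\{|\mathcal{S}_k|\le s_k,\ k\in[K]\}$ is exactly the family of independent sets of the partition matroid on $\mathcal{V}$ with blocks $\{(i,k):i\in[m]\}$ and capacities $s_k$. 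Since conditioning never increases entropy, $\mathbb{H}(x_{1:K}\mid\mathcal{S}_{1:K})$ is nonincreasing in the selected sensors, so $MAX$ is attained at the empty schedule, $f(\emptyset_{1:K})=0$, and $f$ is nondecreasing; and, invoking the supermodularity of $\mathbb{H}(x_{1:K}\mid\mathcal{S}_{1:K})$ in the chosen sensors (the paper's separate technical result), $f$ is submodular on $\mathcal{V}$. Part~1.i is then immediate: at step $k$ Algorithm~\ref{alg:general} calls Algorithm~\ref{alg:greedy_alg} with the explicit cardinality constraint $|\mathcal{S}|\le s_k$, so the returned set obeys $|\mathcal{S}_k|\le s_k$.

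For Part~1.ii I would run the greedy analysis, taking care that Algorithm~\ref{alg:general} is the \emph{block-by-block} greedy (it completes block $k$ through its inner greedy before starting block $k+1$) rather than the textbook global matroid greedy. Let $G=\mathcal{S}_{1:K}$ be the output, let $O=O_{1:K}$ (with $O_k\subseteq[m]$, $|O_k|\le s_k$) be an optimal schedule, write $G_{\le k}=\mathcal{S}_{1:k}$, and let $g_k=f(G_{\le k})-f(G_{\le k-1})$ be the increase of $f$ accrued while the inner greedy processes block $k$, so $\sum_k g_k=f(G)$ by normalization. The crux is the inequality
\begin{equation*}
 f(G_{\le k}\cup O_k)-f(G_{\le k})\ \le\ g_k ,
\end{equation*}
which I would get from two submodularity facts: the per-step marginal gains of the inner greedy inside block $k$ are nonincreasing, so each is at least the final gain $\delta_k$; and every $o\in O_k$ is either already in $G_{\le k}$ or remains an admissible candidate at that final step, so in both cases its marginal over $G_{\le k}$ is at most $\delta_k$. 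Since $|O_k|\le s_k$ and $g_k$ is a sum of step-gains each $\ge\delta_k$, summing the per-element marginals gives $f(G_{\le k}\cup O_k)-f(G_{\le k})\le s_k\delta_k\le g_k$. Feeding this into the telescoping bound $f(O)\le f(O\cup G)=f(G)+\sum_{k}\big(f(G\cup O_{\le k})-f(G\cup O_{<k})\big)$, and using submodularity to upper bound each summand by $f(G_{\le k}\cup O_k)-f(G_{\le k})$, yields $f(O)\le f(G)+\sum_k g_k=2f(G)$, i.e.\ $f(G)\ge\tfrac12 f(O)$. Substituting $f=MAX-\mathbb{H}$ and $f(O)=MAX-OPT$ gives $\mathbb{H}(x_{1:K}\mid\mathcal{S}_{1:K})\le\tfrac12(MAX+OPT)$, which rearranges to \eqref{ineq:opt_guar}.

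For Part~2 I would simply count: the outer loop of Algorithm~\ref{alg:general} runs $K$ times, and its $k$-th iteration invokes Algorithm~\ref{alg:greedy_alg}, whose cost is $O(s_k^2)$ evaluations of $\mathbb{H}(x_{1:K}\mid\cdot)$ (it performs $O(s_k)$ greedy rounds, each re-evaluating the objective $O(s_k)$ times); since each evaluation costs $T$ by hypothesis, iteration $k$ costs $O(s_k^2 T)$ and the total is $O\!\big(\sum_{k=1}^{K}s_k^2 T\big)$.

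The step I expect to be the real obstacle is not internal to this theorem but is the ingredient it leans on: proving that $\mathbb{H}(x_{1:K}\mid\mathcal{S}_{1:K})$ is supermodular \emph{jointly over the whole horizon} --- with each sensor--time pair treated as a single element of $\mathcal{V}$ --- for arbitrary stochastic processes and nonlinear measurements; without this, $f$ is not submodular and the $1/2$ guarantee collapses. A second, smaller subtlety to flag explicitly is that Algorithm~\ref{alg:general} is the block-sequential greedy, so one must verify (as in the crux inequality above) that exhausting blocks one at a time still attains the factor $1/2$ rather than the weaker factor obtained by naively composing per-block $(1-1/e)$ guarantees.
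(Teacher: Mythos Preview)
Your proposal is correct and follows the same overall strategy as the paper: recast Problem~1 as monotone submodular maximization over a partition matroid by setting $f=MAX-\mathbb{H}$, establish monotonicity and supermodularity of $\mathbb{H}(x_{1:K}\mid\mathcal{S}_{1:K})$, and read off the $1/2$ guarantee and the evaluation count. The differences are in execution rather than concept.

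First, where you run a direct, self-contained greedy/telescoping argument tailored to the block-sequential structure of Algorithm~\ref{alg:general}, the paper simply invokes Fisher--Nemhauser--Wolsey (their Lemma~\ref{lem:sub_guarantees}) with $P=1$ after verifying that each $(\mathcal{E}_k,\mathcal{C}_k)=(\mathcal{R}_k,\{\mathcal{S}:|\mathcal{S}|\le s_k\})$ is a matroid (their Lemma~\ref{lemma:as_Problem_1.6}). Your route buys an explicit treatment of the point you flag --- that Algorithm~\ref{alg:general} exhausts block $k$ before touching block $k+1$ --- while the paper's one-line ``the adaptation \ldots\ results to Algorithm~\ref{alg:general}'' leaves that verification implicit in the cited reference. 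Conversely, the black-box citation is shorter and makes the matroid structure the only thing to check.

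Second, you treat supermodularity as an external ingredient; the paper proves it inside the theorem's proof (Proposition~\ref{prop:seq_sub}). The key step there is short and worth noting for completeness: writing the marginal drop as mutual information $\mathbb{I}(x_{1:K};C\mid A)=\mathbb{H}(C\mid A)-\mathbb{H}(C\mid x_{1:K},A)$, the first term decreases when $A$ is enlarged to $B$, and the second is \emph{unchanged} because, by Assumption~\ref{ass:indep}, the measurements are conditionally independent given $x_{1:K}$. That conditional-independence observation is exactly the ``real obstacle'' you anticipated, and it resolves it in two lines.

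For Part~2 your count matches the paper's verbatim.
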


In the following paragraphs, we discuss Algorithm~\ref{alg:general}'s approximation quality and time complexity and fully characterize the latter in Theorem~\ref{th:time_compl_evaluation} and Corollary \ref{cor:time_compl_gaussian} for Gaussian processes and Gaussian measurement noise. 


\paragraph*{Supermodularity and monotonicity of $\mathbb{H}(x_{1:K}|y_{1:K})$}
We state two properties of $\mathbb{H}(x_{1:K}|y_{1:K})$ that are used to prove Theorem~\ref{th:alg_performance}. In particular, we show that $\mathbb{H}(x_{1:K}|y_{1:K})$ is a non-increasing and supermodular function with respect to the sequence of selected sensors. Then, Theorem~\ref{th:alg_performance} follows by combining these two results with results on submodular functions maximization over matroid constraints~\cite{fisher1978analysis}. These derivations are presented in Appendix~\ref{app:proof_of_theorem}.

\paragraph*{Approximation quality of Algorithm~\ref{alg:general}}
Theorem~\ref{th:alg_performance}  quantifies the worst-case performance of Algorithm~\ref{alg:general} across all values of Problem 1's parameters.  The reason is that the right-hand side of~\eqref{ineq:opt_guar} is constant.  In particular,~\eqref{ineq:opt_guar} guarantees that for any instance of Problem 1, the distance of the approximate cost $\mathbb{H}(x_{1:K}|\mathcal{S}_{1:K})$ from $OPT$ is at most $1/2$ the distance of the worst (maximum) cost $MAX$ from $OPT$. This approximation factor is close to the optimal approximation factor $1/e \cong .38$ one can achieve in the worst-case for Problem 1 in polynomial time~\cite{vondrak2010submodularity}; the reason is twofold: first, Problem 1 involves the minimization of a non-increasing and supermodular function~\cite{Nemhauser:1988:ICO:42805}, and second, as we proved in Theorem \ref{th:np}, Problem 1 is in the worst-case equivalent to the minimal observability problem introduced in~\cite{olshevsky2014minimal}, which cannot be approximated in polynomial time with a better factor than the $1/e$~\cite{Feige:1998:TLN:285055.285059}.

\begin{myremark}\label{rem:best_apprx}
We can improve the $1/2$ approximation factor of Algorithm \ref{alg:general} to $1/e$ by utilizing the algorithm introduced in \cite{vondrak2008optimal}.  However, this algorithm has time complexity $O((nK)^{11}T)$, where $T$ is the time complexity of evaluating $\mathbb{H}(x_{1:K}|\mathcal{S}_{1:K}'): \mathcal{S}_k' \subseteq [m],$ $k \in [K]\mapsto \mathbb{R}$ at an $\mathcal{S}_{1:K}'$.
\end{myremark}

\paragraph*{Time complexity of Algorithm~\ref{alg:general}}
Algorithm~\ref{alg:general}'s time complexity is broken down into two parts: a) the number of evaluations of $\mathbb{H}(x_{1:K}|y_{1:K})$ required by the algorithm; b)~the time complexity of each such evaluation.  In more detail:

\paragraph{Number of evaluations of $\mathbb{H}(x_{1:K}|y_{1:K})$ required by Algorithm~\ref{alg:general}}
Algorithm~\ref{alg:general} requires at most $s_k^2$ evaluations of $\mathbb{H}(x_{1:K}|y_{1:K})$ at each $k\in [K]$.    
Therefore, Algorithm~\ref{alg:general} achieves a time complexity that is only linear in $K$ with respect to the number of evaluations of $\mathbb{H}(x_{1:K}|y_{1:K})$; the reason is that $\sum_{k=1}^K s_k^2 \leq \max_{k\in [K]}(s_k^2)K$.  This is in contrast to the algorithm in Remark \ref{rem:best_apprx}, that obtains the best approximation factor $1/e$, whose time complexity is of the order $O((nK)^{11})$ with respect to the number of evaluations of $\mathbb{H}(x_{1:K}|y_{1:K})$.\footnote{We can also speed up Algorithm~\ref{alg:general} by implementing in Algorithm~\ref{alg:greedy_alg} the method of lazy evaluations~\cite{minoux1978accelerated}: this method avoids in Step 2 of Algorithm~\ref{alg:greedy_alg} the computation of $\rho_i(\mathcal{S}^{t-1})$ for unnecessary choices of $i$.}

\begin{algorithm}[tl]
\caption{Single step greedy algorithm (subroutine in Algorithm~\ref{alg:general}).}
\begin{algorithmic}
\REQUIRE Current iteration $k$, selected sensor sets $(\mathcal{S}_1, \mathcal{S}_2,\ldots,$ $\mathcal{S}_{k-1})$ up to the current iteration, constraint $s_k$, error metric $\mathbb{H}(x_{1:K}|\mathcal{S}_{1:K}): \mathcal{S}_k \subseteq [m], k \in [K] \mapsto \mathbb{R}$
\ENSURE Sensor set $\mathcal{S}_k$ that approximates the solution to Problem 1 at time $k$
\vspace{0.25mm}
\STATE $\mathcal{S}^0\leftarrow\emptyset$, $\mathcal{X}^0\leftarrow [m]$, and $t\leftarrow 1$
\STATE \textbf{Iteration t:}
\begin{enumerate}[label=\arabic*.]
\item If $\mathcal{X}^{t-1}=\emptyset$, \textbf{return} $\mathcal{S}^{t-1}$
\item Select $i(t)\in \mathcal{X}^{t-1}$ for which $\rho_{i(t)}(\mathcal{S}^{t-1})=\max_{i \in \mathcal{X}^{t-1}}\rho_i(\mathcal{S}^{t-1})$, with ties settled arbitrarily, where:
\begin{eqnarray*}
\rho_i(\mathcal{S}^{t-1}) &\equiv &\mathbb{H}(x_{1:K}|\mathcal{S}_{1:k-1}, \mathcal{S}^{t-1})-\\
&&\qquad\mathbb{H}(x_{1:K}|\mathcal{S}_{1:k-1}, \mathcal{S}^{t-1}\cup\{i\}) 
\end{eqnarray*}
\item[{3.a.}] If $|\mathcal{S}^{t-1}\cup \{i(t)\}| > s_k$, $\mathcal{X}^{t-1}\leftarrow \mathcal{X}^{t-1}\setminus\{i(t)\}$, and go to Step 1
\item[{3.b.}] If $|\mathcal{S}^{t-1}\cup \{i(t)\}| \leq s_k$, $\mathcal{S}^t\leftarrow \mathcal{S}^{t-1}\cup \{i(t)\}$ and $\mathcal{X}^{t}\leftarrow \mathcal{X}^{t-1}\setminus\{i(t)\}$
\item[{4.}] $t\leftarrow t+1$ and continue
\end{enumerate}
\end{algorithmic} \label{alg:greedy_alg}
\end{algorithm}

\paragraph{Time complexity of each evaluation of $\mathbb{H}(x_{1:K}|y_{1:K})$} This time complexity depends on the properties of both the stochastic process~\eqref{eq:dynamics_general} (similarly,~\eqref{eq:dynamics_general_discrete}) and the measurement noise $v_{i,k}$ in~\eqref{eq:nonlinear_measurements}.  For the case of Gaussian stochastic processes and measurement noises:

\begin{mytheorem}\label{th:time_compl_evaluation}
Consider the Gaussian process model~\eqref{eq:dynamics_gaussian} and suppose that the measurement noise is Guassian: $v_{i,k} \sim \mathcal{N}(0,\Sigma(v_{i,k}))$ such that $\Sigma(v_{i,k})\succ 0$. The time complexity of evaluating $\mathbb{H}(x_{1:K}|y_{1:K})$ depends on the sparsity pattern of $\Sigma(x_{1:K})$ and $\Sigma(x_{1:K})^{-1}$ as follows.
\begin{itemize}
\item Each evaluation of $\mathbb{H}(x_{1:K}|y_{1:K})$ has time complexity $O(n^{2.4}K)$, when \emph{either $\Sigma(x_{1:K})$ or $\Sigma(x_{1:K})^{-1}$} is exactly sparse (that is, block tri-diagonal).
\item Each evaluation of $\mathbb{H}(x_{1:K}|y_{1:K})$ has time complexity $O(n^{2.4}K^{2.4})$, when \emph{both $\Sigma(x_{1:K})$ and $\Sigma(x_{1:K})^{-1}$} are dense.
\end{itemize}
\end{mytheorem}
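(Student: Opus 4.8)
The plan is to reduce the evaluation of $\mathbb{H}(x_{1:K}|y_{1:K})$ to a single log-determinant computation and then exploit sparsity. By Proposition~\ref{prop:entropy}, under the Gaussian process model and Gaussian noise we have $\mathbb{H}(x_{1:K}|y_{1:K})=\frac{1}{2}\log\det(\Sigma(x^{\star}_{1:K}))+\frac{nK}{2}\log(2\pi e)$, so it suffices to bound the cost of computing $\log\det(\Sigma(x^{\star}_{1:K}))$. First I would record the two equivalent closed forms of the posterior error covariance $\Sigma(x^{\star}_{1:K})$ of the estimator $x^{\star}_{1:K}$ that underlie that proposition: linearizing each active sensor function $g_i$ about the current estimate (justified by the differentiability in Assumption~\ref{ass:indep}) and stacking the Jacobians over $k\in[K]$ yields the information form $\Sigma(x^{\star}_{1:K})^{-1}=\Sigma(x_{1:K})^{-1}+H^{\top}R^{-1}H$ and, by the matrix determinant (Sylvester) identity, the covariance form $\det(\Sigma(x^{\star}_{1:K}))=\det(\Sigma(x_{1:K}))\,\det(R)\,\det(R+H\Sigma(x_{1:K})H^{\top})^{-1}$, where $R=\mathrm{diag}(\Sigma(v_{i,k}))_{i\in\mathcal{S}_k,\,k\in[K]}$ is the block-diagonal measurement-noise covariance, invertible since $\Sigma(v_{i,k})\succ 0$, and $H=\mathrm{diag}(H_1,\ldots,H_K)$ is block-diagonal in the time index because the sensors active at $t_k$ observe only $x_k$.

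The key step is a sparsity-propagation observation. Since $H$ is block-diagonal in time, $H^{\top}R^{-1}H$ is block-diagonal with $k$-th block $H_k^{\top}R_k^{-1}H_k$, and the congruence $X\mapsto HXH^{\top}$ preserves block-bandedness, so $H\Sigma(x_{1:K})H^{\top}$ is block-tridiagonal whenever $\Sigma(x_{1:K})$ is. Hence: (i) if $\Sigma(x_{1:K})^{-1}$ is block-tridiagonal then so is $\Sigma(x^{\star}_{1:K})^{-1}$, and I would evaluate $\log\det(\Sigma(x^{\star}_{1:K}))=-\log\det(\Sigma(x^{\star}_{1:K})^{-1})$ from the information form; (ii) if instead $\Sigma(x_{1:K})$ is block-tridiagonal, I would use the covariance form, in which every factor is the determinant of either a block-tridiagonal ($\Sigma(x_{1:K})$ and $R+H\Sigma(x_{1:K})H^{\top}$) or a block-diagonal ($R$) matrix. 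In both cases the dominant cost is a log-determinant of a $K$-block block-tridiagonal matrix with $n\times n$ blocks; a block $LDL^{\top}$ sweep does this in $O(K)$ steps, each an $n\times n$ inversion, a constant number of $n\times n$ products, and one $n\times n$ log-determinant, i.e. $O(n^{\omega})$ with $\omega<2.373$ the matrix-multiplication exponent, giving $O(n^{2.4}K)$; forming the Jacobians and the small blocks $H_k^{\top}R_k^{-1}H_k$, $H_k\Sigma_{k\ell}H_\ell^{\top}$, together with $\det(R)$, costs $O(n^{2}K)$ under bounded per-sensor measurement dimension and is therefore subdominant. This proves the first bullet. For the second bullet, when neither $\Sigma(x_{1:K})$ nor $\Sigma(x_{1:K})^{-1}$ is sparse, $\Sigma(x^{\star}_{1:K})$ (equivalently its inverse) is a generic $nK\times nK$ matrix, and a single dense log-determinant via LU costs $O((nK)^{\omega})=O(n^{2.4}K^{2.4})$.

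I expect the main obstacle to be the bookkeeping rather than any deep step: pinning down, through Proposition~\ref{prop:entropy}, that ``error covariance of the minimum mean square error estimator'' means the linearized posterior covariance entering that proposition, and then verifying that every auxiliary operation stays within the claimed budget --- in particular that forming $H\Sigma(x_{1:K})H^{\top}$ and factoring $R+H\Sigma(x_{1:K})H^{\top}$ neither destroy block-tridiagonality nor blow up when the number $s_k$ of active sensors approaches $m$. The underlying numerical-linear-algebra facts --- the cost of block-tridiagonal and dense log-determinants, the matrix determinant lemma, and preservation of block-banded structure under the block-diagonal congruence --- are standard, and I would cite them rather than re-derive them.
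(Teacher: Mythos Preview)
Your proposal is correct and follows essentially the same route as the paper: derive two closed-form expressions for the conditional entropy, one whose nontrivial term is $\log\det\bigl(\Sigma(x_{1:K})^{-1}+H^{\top}R^{-1}H\bigr)$ and one whose nontrivial term is $\log\det\bigl(R+H\Sigma(x_{1:K})H^{\top}\bigr)$, observe that $H$ and $R$ are block-diagonal in the time index so that each expression inherits block-tridiagonality from $\Sigma(x_{1:K})^{-1}$ or $\Sigma(x_{1:K})$ respectively, and then evaluate a block-tridiagonal log-determinant in $O(n^{2.4}K)$ (or a dense one in $O((nK)^{2.4})$).

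The only noteworthy difference is in how the second formula is obtained. The paper derives it via the entropy chain rule $\mathbb{H}(x_{1:K}|y_{1:K})=\mathbb{H}(y_{1:K}|x_{1:K})-\mathbb{H}(y_{1:K})+\mathbb{H}(x_{1:K})$ and then identifies $\Sigma(y_{1:K})=R+H\Sigma(x_{1:K})H^{\top}$, whereas you obtain the same factorization directly from the matrix determinant lemma applied to the information form. Both routes yield identical expressions and identical complexity bounds; your linear-algebraic shortcut is slightly more direct, while the paper's information-theoretic derivation makes the role of each term (noise entropy, measurement entropy, prior entropy) more transparent. The paper also spells out the cost of computing the linearization point $\tilde{\mu}_{1:K}$ and cites specific results (Molinari for block-tridiagonal determinants, Coppersmith--Winograd for $n^{2.4}$), which you would do well to mirror when filling in the details.
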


Theorem~\ref{th:time_compl_evaluation} implies that when $\Sigma(x_{1:K})$ or $\Sigma(x_{1:K})^{-1}$ is exactly sparse, the time complexity of each evaluation of $\mathbb{H}(x_{1:K}|y_{1:K})$ is only linear in $K$.  This is important because $\Sigma(x_{1:K})$ or $\Sigma(x_{1:K})^{-1}$ is exactly sparse for several applications and system models~\cite{dellaert2006square}.  For example, in adversarial target tracking applications, where the target wants to avoid capture and randomizes its motion in the environment (by un-correlating its movements), $\Sigma(x_{1:K})$ can be considered tri-diagonal (since this implies $x(t_k)$ and $x(t_{k'})$ are uncorrelated for $|k-k'|>2$).  Similarly, in SLAM, or in system models where the Gaussian process in~\eqref{eq:dynamics_gaussian} is generated by a linear or nonlinear system corrupted with Gaussian noise, $\Sigma(x_{1:K})^{-1}$ is block tri-diagonal \cite{anderson2015batch}.  In particular, for linear systems, $\Sigma(x_{1:K})^{-1}$ is block tri-diagonal~\cite[Section 3.1]{anderson2015batch}, and for nonlinear systems, $\Sigma(x_{1:K})^{-1}$ is efficiently approximated by a block tri-diagonal matrix as follows: for each $k$, before the $k$-th iteration of Step 1 in Algorithm~\ref{alg:general}, we first compute $\tilde{\mu}_{1:K}$ given $y_{1:(k-1)}$ up to $k$.  This step has complexity $O(n^{2.4}K)$ when $\Sigma(x_{1:K})^{-1}$ is sparse~\cite[Eq.~(5)]{anderson2015batch}~\cite[Section 3.8]{quarteroni2010numerical}, and it does not increase the total time complexity of Algorithm~\ref{alg:general}.  Then, we continue as in~\cite[Section 3.2]{anderson2015batch}.

\paragraph*{Sparsity in $\mathbb{H}(x_{1:K}|y_{1:K})$}
We state the two properties of $\mathbb{H}(x_{1:K}|y_{1:K})$  that result to Theorem~\ref{th:time_compl_evaluation}.  In particular, we prove that $\mathbb{H}(x_{1:K}|y_{1:K})$ is expressed in closed form with two different formulas such that the time complexity for the evaluation of $\mathbb{H}(x_{1:K}|y_{1:K})$ using the first formula is decided by the sparsity pattern of $\Sigma(x_{1:K})$, whereas using the second formula is decided by the sparsity pattern of $\Sigma(x_{1:K})^{-1}$.  
The reason for this dependence is that the rest of the matrices in these formulas are sparser than $\Sigma(x_{1:K})$ or $\Sigma(x_{1:K})^{-1}$; in particular, they are block diagonal.

The full characterization of Algorithm~\ref{alg:general}'s time complexity for Gaussian processes and Gaussian measurement noises follows.

\begin{mycorollary}\label{cor:time_compl_gaussian}
Consider the Gaussian process model~\eqref{eq:dynamics_gaussian} and suppose that the measurement noise is Gaussian: $v_{i,k} \sim \mathcal{N}(0,\Sigma(v_{i,k}))$ such that $\Sigma(v_{i,k})\succ 0$.  The time complexity of Algorithm~\ref{alg:general} depends on the sparsity pattern of $\Sigma(x_{1:K})$ and $\Sigma(x_{1:K})^{-1}$ as follows.
\begin{itemize}
\item Algorithm~\ref{alg:general} has time complexity $O(n^{2.4}K\sum_{k=1}^K s_k^2)$, when \emph{either $\Sigma(x_{1:K})$ or $\Sigma(x_{1:K})^{-1}$} is exactly sparse (that is, block tri-diagonal).
\item Algorithm~\ref{alg:general} has time complexity $O(n^{2.4}K^{2.4}\sum_{k=1}^K s_k^2)$, when \emph{both $\Sigma(x_{1:K})$ and $\Sigma(x_{1:K})^{-1}$} are dense.
\end{itemize}
\end{mycorollary}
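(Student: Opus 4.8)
The plan is to combine the generic running-time bound of Theorem~\ref{th:alg_performance} with the per-evaluation cost established in Theorem~\ref{th:time_compl_evaluation}. By part~2) of Theorem~\ref{th:alg_performance}, Algorithm~\ref{alg:general} performs at most $\sum_{k=1}^K s_k^2$ evaluations of the objective $\mathbb{H}(x_{1:K}|\mathcal{S}_{1:K}')$ and, apart from these evaluations, only does bookkeeping that is cheap relative to one such evaluation (the argmax comparisons in Step~2 of Algorithm~\ref{alg:greedy_alg} and the set updates). Hence its total time complexity is $O(\sum_{k=1}^K s_k^2\,T)$, where $T$ is the cost of a single evaluation of the entropy metric at a given sensor schedule.

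Next I would invoke Theorem~\ref{th:time_compl_evaluation}, which applies precisely because the hypotheses of the corollary place us under the Gaussian process model~\eqref{eq:dynamics_gaussian} with Gaussian, positive-definite measurement noise. That theorem gives $T=O(n^{2.4}K)$ when either $\Sigma(x_{1:K})$ or $\Sigma(x_{1:K})^{-1}$ is block tri-diagonal, and $T=O(n^{2.4}K^{2.4})$ when both are dense. Substituting these two values of $T$ into $O(\sum_{k=1}^K s_k^2\,T)$ immediately yields the two claimed bounds $O(n^{2.4}K\sum_{k=1}^K s_k^2)$ and $O(n^{2.4}K^{2.4}\sum_{k=1}^K s_k^2)$, respectively.

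The only point requiring more than pure substitution is the nonlinear-system instance of the sparse case, where $\Sigma(x_{1:K})^{-1}$ is not block tri-diagonal exactly but is \emph{approximated} by such a matrix, which in turn requires computing the linearization point $\tilde{\mu}_{1:K}$ given $y_{1:(k-1)}$ before the $k$-th outer iteration. I would note that this auxiliary step amounts to solving a block tri-diagonal system and so costs $O(n^{2.4}K)$ per outer iteration, hence $O(n^{2.4}K^2)$ over all $K$ iterations (cf.~\cite[Eq.~(5)]{anderson2015batch},~\cite[Section~3.8]{quarteroni2010numerical}); since $\sum_{k=1}^K s_k^2 \geq K$ whenever each $s_k\geq 1$, this term is dominated by $O(n^{2.4}K\sum_{k=1}^K s_k^2)$ and therefore does not alter the stated complexity. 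I do not anticipate a genuine obstacle here: the corollary is essentially a composition of the two theorems it cites, and the only care needed is to verify that this linearization overhead in the nonlinear case is absorbed into the dominant term.
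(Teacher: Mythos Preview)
Your proposal is correct and matches the paper's intended argument: the paper does not give a separate proof of Corollary~\ref{cor:time_compl_gaussian}, treating it as an immediate consequence of combining part~2) of Theorem~\ref{th:alg_performance} with Theorem~\ref{th:time_compl_evaluation}, exactly as you do. Your extra paragraph on the linearization overhead is a careful elaboration of what the paper asserts in one line (``it does not increase the total time complexity of Algorithm~\ref{alg:general}'') in the proofs of Lemmas~\ref{lem:entropy_closed_inv_sigma} and~\ref{lem:entropy_closed_Sigma}; the domination argument via $\sum_{k=1}^K s_k^2 \geq K$ is sound and makes explicit what the paper leaves implicit.
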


\paragraph*{Comparison of Algorithm~\ref{alg:general}'s time complexity for Gaussian processes and Gaussian measurement noises, per Corollary \ref{cor:time_compl_gaussian}, to that of existing scheduling algorithms}
The most relevant algorithm to Algorithm~\ref{alg:general} is the one provided in~\cite{2016arXiv160807533T}, where linear systems with additive process noise and measurement noises with \emph{any} distribution are assumed. Algorithm~\ref{alg:general} generalizes~\cite{2016arXiv160807533T} from linear systems and measurements to Gaussian processes and nonlinear measurements.  At the same time, it achieves the same time complexity as the algorithm in~\cite{2016arXiv160807533T} when $\Sigma(x_{1:K})$ or $\Sigma(x_{1:K})^{-1}$ is exactly sparse.  This is important since the algorithm in~\cite{2016arXiv160807533T} has time complexity lower than the-state-of-the-art batch estimation sensor scheduling algorithms, such as the algorithm proposed in \cite{Roy2016369}, and similar to that of the state of the art Kalman filter scheduling algorithms, such as those proposed in \cite{joshi2009sensor,liu2014optimal,jawaid2015submodularity} (in particular, lower for large $K$). 

\section{Conclusion}\label{sec:conc}

In this paper, we proposed Algorithm~\ref{alg:general} for the NP-hard problem of sensor scheduling for stochastic process estimation. Exploiting the supermodularity and monotonicity of conditional entropy, we proved that the algorithm has an approximation factor $1/2$ and linear complexity in the scheduling horizon. It achieves both the accuracy of batch estimation scheduling algorithms and, surprisingly, when the information structure of the problem is sparse, the low time complexity of Kalman filter scheduling algorithms for linear systems. This is the case, for example, in applications such as SLAM and target tracking, and for processes generated by linear or nonlinear systems corrupted with Gaussian noise. Future work will focus on an event-triggered version of the scheduling problem, in which the measurement times are decided online based on the available measurements, and on a decentralized version, in which information is exchanged only among neighboring~sensors.

\section*{Appendix A: Proof of Proposition~\ref{prop:entropy}}\label{appendix:cond_entropy}

\begin{proof}
We first show that the conditional probability distribution of  $x_{1:K}$ given $y_{1:K}$ is Gaussian with covariance $\Sigma(x^{\star}_{1:K})$, and then apply the following lemma: 
\begin{mylemma}[Ref.~\cite{cover2012elements}]\label{lem:entropy_gaussian}
Let $x \sim \mathcal{N}(\mu, \Sigma)$ and $x \in \mathbb{R}^m$:
\[
\mathbb{H}(x)=\frac{1}{2}\log[(2\pi e)^m\det(\Sigma)].
\]
\end{mylemma}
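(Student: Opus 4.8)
The plan is to establish the closed-form differential entropy directly from the definition $\mathbb{H}(x) = -\mathbb{E}[\log p(x)]$, where $p$ is the density of $x \sim \mathcal{N}(\mu,\Sigma)$ on $\mathbb{R}^m$. First I would write out the Gaussian density and take its logarithm, obtaining $\log p(x) = -\frac{m}{2}\log(2\pi) - \frac{1}{2}\log\det(\Sigma) - \frac{1}{2}(x-\mu)^\top \Sigma^{-1}(x-\mu)$. Substituting this into the entropy definition and using linearity of expectation splits the computation into a deterministic constant part, $\frac{m}{2}\log(2\pi)+\frac{1}{2}\log\det(\Sigma)$, plus the expectation of the quadratic form $\frac{1}{2}\mathbb{E}[(x-\mu)^\top \Sigma^{-1}(x-\mu)]$. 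Note that I use natural logarithms throughout, which is consistent with the factor $(2\pi e)^m$ appearing in the claimed formula.

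The one step that requires care is the quadratic-form expectation, and I would handle it with the trace trick. Since a scalar equals its own trace, $(x-\mu)^\top \Sigma^{-1}(x-\mu) = \operatorname{tr}\!\left(\Sigma^{-1}(x-\mu)(x-\mu)^\top\right)$, and by the linearity of trace and expectation together with $\mathbb{E}[(x-\mu)(x-\mu)^\top] = \Sigma$, this equals $\operatorname{tr}(\Sigma^{-1}\Sigma) = \operatorname{tr}(I) = m$. Collecting all the pieces then gives $\mathbb{H}(x) = \frac{m}{2}\log(2\pi) + \frac{1}{2}\log\det(\Sigma) + \frac{m}{2}$, and absorbing the additive $\frac{m}{2} = \frac{m}{2}\log e$ into the first term yields $\frac{1}{2}\log\!\left[(2\pi e)^m \det(\Sigma)\right]$, which is exactly the asserted identity.

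An alternative route I would keep in reserve is a whitening argument, which avoids the quadratic-form integral entirely. The standard normal $z \sim \mathcal{N}(0,I_m)$ has independent coordinates, so its entropy factors into $m$ copies of the scalar Gaussian entropy $\frac{1}{2}\log(2\pi e)$, giving $\mathbb{H}(z) = \frac{m}{2}\log(2\pi e)$. Applying the affine change of variables $x = \Sigma^{1/2} z + \mu$ and the shift rule for differential entropy under linear maps, $\mathbb{H}(Az+b) = \mathbb{H}(z) + \log|\det A|$, with $A = \Sigma^{1/2}$ contributes $\log|\det \Sigma^{1/2}| = \frac{1}{2}\log\det(\Sigma)$, reproducing the same formula. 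In either approach the only genuine subtlety is that $\Sigma \succ 0$ is needed so that $\Sigma^{-1}$ (equivalently $\Sigma^{1/2}$) exists and the density is well defined; this is implicit in writing $x \sim \mathcal{N}(\mu,\Sigma)$ as an absolutely continuous law, and I would state it explicitly. Since the lemma is attributed to a standard reference, the proof is routine and presents no substantive obstacle beyond the bookkeeping just described.
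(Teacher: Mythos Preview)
Your proof is correct and entirely standard. The only point of comparison is that the paper does not actually prove this lemma at all: it is stated with a citation to Cover and Thomas and used as a black box, so there is no ``paper's own proof'' to match. Your direct computation via $-\mathbb{E}[\log p(x)]$ and the trace identity $\mathbb{E}[(x-\mu)^\top\Sigma^{-1}(x-\mu)] = \operatorname{tr}(\Sigma^{-1}\Sigma)=m$ is exactly the argument one finds in that reference, and your whitening alternative is a nice self-contained backup. Your explicit remark that $\Sigma\succ 0$ is required for the density (and hence the differential entropy) to be well defined is a worthwhile clarification that the paper leaves implicit.
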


Specifically, due to Assumption~\ref{ass:indep}, $(x_{1:K},y_{1:K})$ are jointly Gaussian.  This has a twofold implication: first, the minimum mean square estimator of $x_{1:K}$ given $y_{1:K}$ is linear in $y_{1:K}$~\cite[Proposition E.2]{bertsekas2005dynamic}; second, the conditional probability distribution of $x_{1:K}$ given $y_{1:K}$ is Gaussian~\cite{venkatesh2012theory}, with covariance $\Sigma(x^{\star}_{1:K})$.  Therefore, due to~\cite[Proposition E.3]{bertsekas2005dynamic}, this is also the covariance of the minimum mean square estimator of $x_{1:K}$ given $y_{1:K}$. As a result, due to Lemma~\ref{lem:entropy_gaussian}:
\begin{align}\label{eq:entropy_closed_app}
\mathbb{H}(x_{1:K}|y_{1:K}) &= \mathbb{E}_{y_{1:K}=y_{1:K}'}\left(\mathbb{H}(x_{1:K}|y_{1:K}=y_{1:K}')\right)\nonumber\\
&= \mathbb{E}_{y_{1:K}=y_{1:K}'}\left(\frac{1}{2}\log[(2\pi e)^{nK}\det(\Sigma(x^{\star}_{1:K}))\right)\nonumber\\
&= \frac{nK\log(2\pi e)+\log\det(\Sigma(x^{\star}_{1:K}))}{2}.
\end{align} 
We derive a formula for $\Sigma(x^{\star}_{1:K})$ in the proof of Lemma~\ref{lem:entropy_closed_inv_sigma}.~\qedhere
\end{proof}

\section*{Appendix B: Proof of Theorem~\ref{th:np}}\label{app:np}

\begin{proof}
We present for the discrete time case~\eqref{eq:dynamics_general_discrete} an instance of Problem 1 that is equivalent to the NP-hard minimal observability problem introduced in~\cite{olshevsky2014minimal, sergio2015minimal}, that is defined as follows (the proof for the continuous time case is similar):

\paragraph*{Definition (Minimal Observability Problem)}
\textit{Consider the linear time-invariant system:
\begin{align}\label{eq:min_obs_syst}
\begin{split}
\dot{x}(t)&= A x(t), \\
y_i(t) &= r_{i} e_i^\top x(t), i \in [n] 
\end{split}
\end{align}
where $e_i$ is the vector with the $i$-th entry equal to $1$ and the rest equal to $0$, and $r_{i}$ is either zero or one; the \emph{minimal observability problem} follows:
\begin{equation}
\begin{aligned}
& {\text{select}} 
 \; \qquad r_1, r_2, \ldots, r_n \\
&\text{such that } \hspace{2mm}r_1+ r_2+ \ldots+ r_n \leq r,\\
& \qquad\qquad \hspace{1.5mm}~\eqref{eq:min_obs_syst} \text{ is observable}.
\end{aligned}
\end{equation}
}

The minimal observability problem is NP-hard when $A$ is chosen as in the proof of Theorem 1 of~\cite{olshevsky2014minimal}, and $r \leq n$.  We denote this $A$ by $A_{NP-h}$.

Problem 1 is equivalent to the NP-hard minimal observability problem for the following instance: $K=1$, $x(t_0)\sim \mathcal{N}(c,0)$, where $c\in \mathbb{R}^n$ is an unknown constant, $\mu(t)=e^{A_{NP-h}(t-t_0)} x(t_0)$, $\Sigma(t,t')=0$, $m=n$, $g_{i}(x(t))=e_i^\top x(t)$, zero measurement noise, and $s_1=r$.  This observation concludes the proof.
\end{proof}

\section*{Appendix C: Proof of Theorem~\ref{th:alg_performance}} \label{app:proof_of_theorem}

\begin{proof}
We first prove that $\mathbb{H}(x_{1:K}|\mathcal{S}_{1:K})$ is a non-increasing and supermodular function in the choice of the~sensors.  Then, we prove Theorem~\ref{th:alg_performance} by combining these two results and results on the maximization of submodular functions over matroid constraints~\cite{fisher1978analysis}.
 
\paragraph*{Notation} Given $K$ disjoint finite sets $\mathcal{E}_1,\mathcal{E}_2, \ldots,\mathcal{E}_K$ and $A_{i}, B_{i} \in \mathcal{E}_{i}$, we write $A_{1:K} \preceq B_{1:K}$ to denote that for all $i \in [K]$, $A_i\subseteq B_i$ ($A_i$ is a subset of $B_i$).  Moreover, we denote that $A_{i} \in \mathcal{E}_{i}$ for all $i \in [K]$ by $A_{1:K} \in \mathcal{E}_{1:K}$.  In addition, given $A_{1:K}, B_{1:K} \in \mathcal{E}_{1:K}$, we write $A_{1:K} \uplus B_{1:K}$ to denote that for all $i \in [K]$, $A_i\cup B_i$ ($A_i$ union $B_i$).

\begin{mydef}
\textit{Consider $K$ disjoint finite sets $\mathcal{E}_1,\mathcal{E}_2, \ldots,\mathcal{E}_K$.  A function $h: \mathcal{E}_{1:K} \mapsto \mathbb{R}$ is \emph{non-decreasing} if and only if for all $A,B \in \mathcal{E}_{1:K}$ such that $A \preceq B$,
$
h(A)\leq h(B);
$
$h: \mathcal{E}_{1:K}\mapsto \mathbb{R}$ is \emph{non-increasing} if $-h$ is non-decreasing.}
\end{mydef}

\begin{myproposition}\label{prop:entropy_monoton}
For any finite $K \in\mathbb{N}$, consider  $K$ distinct copies of $[m]$, denoted by $\mathcal{R}_1, \mathcal{R}_2, \ldots, \mathcal{R}_K$.  The estimation error metric
$
\mathbb{H}(x_{1:K}|\mathcal{S}_{1:K}): \mathcal{R}_{1:K} \mapsto \mathbb{R}
$
is a non-increasing function in the choice of the~sensors  $\mathcal{S}_{1:K}$.
\end{myproposition}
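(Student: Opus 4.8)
The plan is to reduce the statement to the elementary information-theoretic fact that conditioning does not increase (differential) entropy. Fix two sensor schedules $\mathcal{S}_{1:K}, \mathcal{S}'_{1:K} \in \mathcal{R}_{1:K}$ with $\mathcal{S}_{1:K} \preceq \mathcal{S}'_{1:K}$, so that at every time $k$ the set $\mathcal{S}_k$ of active sensors is contained in $\mathcal{S}'_k$. Because each $y_k = S_k z_k$ merely stacks the measurements $z_{i,k}$ of the sensors indexed by $\mathcal{S}_k$, the batch measurement vector produced by $\mathcal{S}'_{1:K}$ contains, up to a permutation of its entries, the batch measurement vector produced by $\mathcal{S}_{1:K}$ as a sub-vector. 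I would therefore write $y'_{1:K} \equiv (y_{1:K}, \tilde{y}_{1:K})$, where $y_{1:K}$ is the measurement vector associated with $\mathcal{S}_{1:K}$ and $\tilde{y}_{1:K}$ collects the measurements of the additional sensors selected by $\mathcal{S}'_{1:K}$.

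Given this decomposition, the proof is a short chain of (in)equalities: by definition $\mathbb{H}(x_{1:K}|\mathcal{S}'_{1:K}) = \mathbb{H}(x_{1:K}|y_{1:K},\tilde{y}_{1:K})$, and
\[
\mathbb{H}(x_{1:K}|y_{1:K}) - \mathbb{H}(x_{1:K}|y_{1:K},\tilde{y}_{1:K}) = \mathbb{I}(x_{1:K};\tilde{y}_{1:K}\,|\,y_{1:K}) \geq 0,
\]
since conditional mutual information is non-negative --- a property that holds verbatim for differential entropy, as it follows from the non-negativity of a Kullback--Leibler divergence rather than from any sign property of the entropy itself. Hence $\mathbb{H}(x_{1:K}|\mathcal{S}'_{1:K}) \leq \mathbb{H}(x_{1:K}|\mathcal{S}_{1:K})$, i.e., $\mathbb{H}(x_{1:K}|\cdot)$ is non-increasing in the choice of the sensors, which is exactly the claim. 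I would also recall, for consistency with Appendix~A, the expectation convention $\mathbb{H}(x_{1:K}|y_{1:K}) = \mathbb{E}_{y_{1:K}=y'_{1:K}}\big(\mathbb{H}(x_{1:K}|y_{1:K}=y'_{1:K})\big)$ and check that the manipulations above respect it.

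The only point requiring care --- and the step I expect to be the main obstacle --- is well-definedness of the entropies when the model degenerates, e.g., noiseless sensors or a deterministic process ($\Sigma(t,t')=0$), where $x_{1:K}|y_{1:K}$ may be supported on a lower-dimensional affine subspace and its differential entropy can equal $-\infty$. I would handle this in one of two ways: either restrict Proposition~\ref{prop:entropy_monoton} to the non-degenerate regime (Assumption~\ref{ass:indep} together with $\Sigma(v_{i,k}) \succ 0$ and a positive-definite prior covariance, as in Theorem~\ref{th:time_compl_evaluation}), in which case Proposition~\ref{prop:entropy} supplies the closed form $\mathbb{H}(x_{1:K}|y_{1:K}) = \tfrac{1}{2}\log\det\Sigma(x^{\star}_{1:K}) + \tfrac{nK}{2}\log(2\pi e)$ and the inequality becomes monotonicity of $\det\Sigma(x^{\star}_{1:K})$ under the addition of measurements (which also follows from the Loewner-order monotonicity of the posterior covariance); or, for the fully general statement, adopt the convention that $\mathbb{H}$ takes values in $[-\infty,\infty)$ and verify that the chain above remains valid under that convention. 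Everything else is routine.
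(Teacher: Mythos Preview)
Your proof is correct and follows essentially the same route as the paper: split the larger measurement vector into the smaller one plus the extra measurements, then invoke ``conditioning does not increase entropy'' (the paper states this directly, while you phrase it equivalently as non-negativity of conditional mutual information). Your additional remarks on degenerate cases with $-\infty$ entropies are more careful than the paper, which does not address that point.
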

\begin{proof} 
Consider $A, B\in \mathcal{R}_{1:K}$ such that $A \preceq B$, and denote by $B\setminus A\equiv \{i|i\in B, i\notin A\}$:
$\mathbb{H}(x_{1:K}| B)= \mathbb{H}(x_{1:K}| A, B\setminus A)\leq \mathbb{H}(x_{1:K}| A)$ since conditioning can either keep constant or decrease the entropy~\cite{cover2012elements}.
\end{proof}

 

\begin{mydef}\label{def:seq_sub}
\textit{Consider $K$ disjoint finite sets $\mathcal{E}_1,\mathcal{E}_2, \ldots,$ $\mathcal{E}_K$. A function $h: \mathcal{E}_{1:K}\mapsto \mathbb{R}$ is \emph{submodular} if and only if for all $A,B,C \in \mathcal{E}_{1:K}$ such that $A \preceq B$,
$
h(A\uplus C)-h(A)\geq h(B\uplus C)-h(B);
$
$h: \mathcal{E}_{1:K}\mapsto \mathbb{R}$ is \emph{supermodular} if $-h$ is submodular.}
\end{mydef}

\begin{myproposition}
\label{prop:seq_sub}
For any finite $K \in\mathbb{N}$, consider $K$ distinct copies of $[m]$, denoted by $\mathcal{R}_1, \mathcal{R}_2, \ldots, \mathcal{R}_K$; the estimation error metric
$
\mathbb{H}(x_{1:K}|\mathcal{S}_{1:K}): \mathcal{R}_{1:K} \mapsto \mathbb{R}
$
is a set supermodular function in the choice of the~sensors $\mathcal{S}_{1:K}$.
\end{myproposition}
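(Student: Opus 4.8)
The plan is to reduce the sequence-supermodularity of Definition~\ref{def:seq_sub} to ordinary set supermodularity and then to establish the latter by an information-theoretic ``diminishing returns'' argument built on Assumption~\ref{ass:indep}.

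First I would note that choosing sensor $i$ at time $k$ is the same as adjoining the single measurement $z_{i,k}$ to the conditioning set, so $\mathbb{H}(x_{1:K}|\mathcal{S}_{1:K})$ may be viewed as an ordinary set function $h$ on the disjoint-union ground set $\mathcal{V} \equiv \mathcal{R}_1 \sqcup \mathcal{R}_2 \sqcup \cdots \sqcup \mathcal{R}_K$: here $\preceq$ is set inclusion and $\uplus$ is set union in $\mathcal{V}$, so the definition of (sequence) supermodularity coincides with ordinary set supermodularity of $h$ on $\mathcal{V}$. By the standard equivalent characterization it then suffices to verify the singleton diminishing-returns inequality
\[
h(A) - h(A\cup\{e\}) \;\ge\; h(B) - h(B\cup\{e\})
\]
for all $A \subseteq B \subseteq \mathcal{V}$ and all $e \in \mathcal{V}\setminus B$; the general inequality of Definition~\ref{def:seq_sub} follows from this by a routine telescoping over the elements of $C$, using Proposition~\ref{prop:entropy_monoton} (monotonicity of $\mathbb{H}$) to dispose of the terms indexed by elements already present.

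Next, writing $z_A$ for the collection of measurements indexed by $A$ and $z_e$ for the measurement indexed by $e$, I would rewrite each marginal as a conditional mutual information, $h(A) - h(A\cup\{e\}) = \mathbb{H}(x_{1:K}\mid z_A) - \mathbb{H}(x_{1:K}\mid z_A, z_e) = \mathbb{I}(x_{1:K}; z_e \mid z_A)$, and then expand $\mathbb{I}(x_{1:K}; z_e \mid z_A) = \mathbb{H}(z_e \mid z_A) - \mathbb{H}(z_e \mid z_A, x_{1:K})$. The crux is that, by Assumption~\ref{ass:indep}, conditioned on $x_{1:K}$ the measurement $z_e = g_i(x_k)+v_{i,k}$ is a function of $v_{i,k}$ alone, which is independent of the noises entering $z_A$ (as $e\notin A$); hence $z_e$ and $z_A$ are conditionally independent given $x_{1:K}$, so $\mathbb{H}(z_e \mid z_A, x_{1:K}) = \mathbb{H}(z_e \mid x_{1:K})$, a quantity independent of $A$. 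Since $A\subseteq B$, ``conditioning reduces entropy'' gives $\mathbb{H}(z_e \mid z_A) \ge \mathbb{H}(z_e \mid z_B)$, and subtracting the common term $\mathbb{H}(z_e \mid x_{1:K})$ yields $\mathbb{I}(x_{1:K}; z_e \mid z_A) \ge \mathbb{I}(x_{1:K}; z_e \mid z_B)$, which is exactly the desired singleton inequality. (Equivalently, via Proposition~\ref{prop:mutual} this is the statement that $\mathbb{I}(x_{1:K}; z_{\mathcal{S}_{1:K}})$ is submodular, as in the conditionally-independent-observations setting, specialized here to nonlinear $g_i$ and to the sequence/matroid structure.)

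The main obstacle I anticipate is not the combinatorics but making these manipulations rigorous for the general, possibly degenerate, continuous stochastic processes of System Model~\eqref{eq:dynamics_general}: the differential entropies and conditional mutual informations must be well defined, and the chain-rule identity and the ``conditioning reduces entropy'' inequality invoked above need justification in the continuous setting, with particular care when the measurement model carries no noise (as in the NP-hardness instance), where these quantities can diverge. I would address this either by working in the regime where the relevant densities exist --- in particular the Gaussian case of Proposition~\ref{prop:entropy}, where everything reduces to monotonicity of $\log\det$ under Schur complements --- or by a limiting/regularization argument that perturbs the noise covariances to be positive definite and passes to the limit.
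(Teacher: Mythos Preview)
Your proposal is correct and follows essentially the same approach as the paper: both rewrite the marginal gain as a conditional mutual information, expand it as $\mathbb{H}(\text{new measurements}\mid\text{old}) - \mathbb{H}(\text{new measurements}\mid\text{old},x_{1:K})$, invoke Assumption~\ref{ass:indep} to make the second term independent of the old measurements, and apply ``conditioning reduces entropy'' to the first. The only packaging difference is that you first reduce to singletons $e$ and telescope, whereas the paper runs the identical chain directly for an arbitrary $C$; your extra paragraph on well-definedness of differential entropies in degenerate cases raises a legitimate technical caveat that the paper does not address.
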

\begin{proof}
Let $A, B, C \in \mathcal{E}_{1:K}$ such that $A \preceq B$:
\begin{align}
 \mathbb{H}(x_{1:K} | A)-&\mathbb{H}(x_{1:K} | A \uplus C)  \label{aux:0}\\
&=\mathbb{H}(x_{1:K} | A)-\mathbb{H}(x_{1:K} | A, C) \nonumber\\
&=\mathbb{I}(x_{1:K}; C | A) \label{aux:1}\\
&=\mathbb{H}(C | A) - \mathbb{H}(C | x_{1:K}, A) \label{aux:2}\\
&\geq \mathbb{H}(C| B) - \mathbb{H}(C | x_{1:K},B) \label{aux:3}\\
&= \mathbb{I}(x_{1:K}; C|B) \label{aux:4} \\
&= \mathbb{H}(x_{1:K}| B) - \mathbb{H}(x_{1:K} | B, C) \label{aux:5}\\
&= \mathbb{H}(x_{1:K}| B) - \mathbb{H}(x_{1:K} | B \uplus C) \label{aux:6}.
\end{align}
Eq.~\eqref{aux:0} and~\eqref{aux:6} follow from our definition of $\uplus$.~\eqref{aux:1} and~\eqref{aux:2},~\eqref{aux:3} and~\eqref{aux:4}, and~\eqref{aux:4} and~\eqref{aux:5} hold due to the definition of mutual information~\cite{cover2012elements}.~\eqref{aux:3} follows from~\eqref{aux:2} due to two reasons: 
first, $\mathbb{H}(C | A)\geq \mathbb{H}(C | B)$, since $A \preceq B$ and conditioning can either keep constant or decrease the entropy~\cite{cover2012elements}; second, $\mathbb{H}( C | x_{1:K}, A)= \mathbb{H}( C | x_{1:K}, B)$ due to the independence of the measurements given $x_{1:K}$, per Assumption~\ref{ass:indep}.
\end{proof}


\begin{proof}[\textit{Proof of Part 1 of Theorem~\ref{th:alg_performance}}]
We use the next result from the literature of maximization of submodular functions over matroid constraints:

\begin{mydef}\label{def:indep_matroid}
\textit{Consider a finite set $\mathcal{E}$ and a collection $\mathcal{C}$ of subsets of $\mathcal{E}$. $(\mathcal{E}, \mathcal{C})$ is:
\begin{itemize}
\item an \emph{independent system} if and only if:
\begin{itemize}
\item $\emptyset \in \mathcal{C}$, where $\emptyset$ denotes the empty set
\item for all $X' \subseteq X \subseteq \mathcal{E}$, if $X \in \mathcal{C}$, $X' \in \mathcal{C}$.
\end{itemize}
\item a \emph{matroid} if and only if in addition to the previous two properties:
\begin{itemize}
\item for all $ X', X \in \mathcal{C}$ where $|X'| < |X|$, there exists $x \notin X'$ and $x \in X$ such that $X' \cup \{x\} \in \mathcal{C}$.
\end{itemize}
\end{itemize}}
\end{mydef}

\begin{mylemma}[Ref.~\cite{fisher1978analysis}]\label{lem:sub_guarantees}
\textit{Consider $K$ independence systems $\{(\mathcal{E}_k,\mathcal{C}_k)\}_{k \in [K]}$, each the intersection of at most $P$ matroids,
and a submodular and non-decreasing function $h: \mathcal{E}_{1:K}\mapsto \mathbb{R}$. There exist a polynomial time greedy algorithm that returns an (approximate) solution $\mathcal{S}_{1:K}$ to: 
\begin{equation}\label{pr:general}
\begin{aligned}
& \underset{\mathcal{S}_{1:K} \preceq \mathcal{E}_{1:K}}{\text{maximize}} 
 \; \quad h(\mathcal{S}_{1:K}) \\
&\hspace{0mm}\text{subject to} \quad\hspace{1mm} \mathcal{S}_k\cap \mathcal{E}_k \in \mathcal{C}_k, k \in [K],
\end{aligned}
\end{equation}
that satisfies:  
\begin{equation}\label{ineq:opt_guar_seq_sub}
\frac{h(\mathcal{O})-h(\mathcal{S}_{1:K})}{h(\mathcal{O})-h(\emptyset)}\leq \frac{P}{1+P},
\end{equation}
where $\mathcal{O}$ is an (optimal) solution to~\eqref{pr:general}.}
\end{mylemma}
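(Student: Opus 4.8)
The plan is to collapse the $K$-fold product constraint into a single matroid-intersection constraint and then run the classical charging argument of Nemhauser, Wolsey and Fisher. First I would reduce to one ground set: put $\mathcal{E}\equiv\mathcal{E}_1\cup\cdots\cup\mathcal{E}_K$ (a disjoint union) and, for each $p\in[P]$, let $M_p$ be the direct sum over $k\in[K]$ of the $p$-th matroid in a representation of $(\mathcal{E}_k,\mathcal{C}_k)$ as an intersection of at most $P$ matroids (padding with the free matroid on $\mathcal{E}_k$ when fewer are needed). A direct sum of matroids is a matroid, and $A\subseteq\mathcal{E}$ is independent in every $M_p$ precisely when $A\cap\mathcal{E}_k\in\mathcal{C}_k$ for all $k$; hence the feasible region of~\eqref{pr:general} is exactly the family of common independent sets of $M_1,\dots,M_P$, and the algorithm in the statement is the standard greedy for maximizing $h$ over this intersection. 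So it suffices to prove: for non-decreasing submodular $h$ on $2^{\mathcal{E}}$, the greedy common independent set $\mathcal{S}_{1:K}$ and an optimal one $\mathcal{O}$ satisfy $h(\mathcal{O})-h(\emptyset)\le(P+1)\bigl(h(\mathcal{S}_{1:K})-h(\emptyset)\bigr)$, which is~\eqref{ineq:opt_guar_seq_sub} after rearranging (numerator and denominator of the quotient are nonnegative by monotonicity and optimality of $\mathcal{O}$).

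Next I would set notation for the greedy run: $\mathcal{S}^0=\emptyset$, and at step $t$ the greedy appends the element $s_t$ of largest marginal value $h(\mathcal{S}^{t-1}\cup\{s_t\})-h(\mathcal{S}^{t-1})$ among all $e$ with $\mathcal{S}^{t-1}\cup\{e\}$ a common independent set, stopping at step $r$ when no such $e$ remains; write $\mathcal{S}^t=\{s_1,\dots,s_t\}$, so $\mathcal{S}_{1:K}=\mathcal{S}^r$ is a \emph{maximal} common independent set and, by monotonicity, every increment $h(\mathcal{S}^t)-h(\mathcal{S}^{t-1})$ is nonnegative. The combinatorial heart is a charging lemma: one can assign to each $o\in\mathcal{O}\setminus\mathcal{S}_{1:K}$ a greedy step $\tau(o)\in\{1,\dots,r\}$ such that (i) $\mathcal{S}^{\tau(o)-1}\cup\{o\}$ is a common independent set --- so $o$ was a feasible candidate when the greedy chose $s_{\tau(o)}$ --- and (ii) no step receives more than $P$ such assignments. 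Property~(i) is achievable by maximality of $\mathcal{S}_{1:K}$: each $o\notin\mathcal{S}_{1:K}$ fails to be addable at some step, and taking the earliest step at which $\mathcal{S}^{\bullet}\cup\{o\}$ turns dependent in one of the matroids keeps $\mathcal{S}^{\tau(o)-1}\cup\{o\}$ independent in all of them. Property~(ii) is where ``intersection of $P$ matroids'' enters: within a fixed matroid a matroid-exchange (span/fundamental-circuit) argument yields an assignment of at most one element of the independent set $\mathcal{O}$ per step, and one superposes the $P$ per-matroid assignments. Assembling this charging lemma rigorously --- the matroid-exchange bookkeeping that yields (i) together with the count~(ii) simultaneously --- is what I expect to be the main obstacle; it is the combinatorial content of~\cite{fisher1978analysis} (see also~\cite{Nemhauser:1988:ICO:42805}), which I would invoke.

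Granting the charging lemma, the remainder is a submodularity telescope. By monotonicity $h(\mathcal{O})\le h(\mathcal{O}\cup\mathcal{S}_{1:K})$; adding the elements of $\mathcal{O}\setminus\mathcal{S}_{1:K}$ to $\mathcal{S}_{1:K}$ one at a time and telescoping, submodularity gives $h(\mathcal{O}\cup\mathcal{S}_{1:K})-h(\mathcal{S}_{1:K})\le\sum_{o\in\mathcal{O}\setminus\mathcal{S}_{1:K}}\bigl(h(\mathcal{S}_{1:K}\cup\{o\})-h(\mathcal{S}_{1:K})\bigr)$. Since $\mathcal{S}^{\tau(o)-1}\subseteq\mathcal{S}_{1:K}$, submodularity again bounds each term by $h(\mathcal{S}^{\tau(o)-1}\cup\{o\})-h(\mathcal{S}^{\tau(o)-1})$; regrouping by the value of $\tau$ gives the upper bound $\sum_{t=1}^{r}\sum_{o\,:\,\tau(o)=t}\bigl(h(\mathcal{S}^{t-1}\cup\{o\})-h(\mathcal{S}^{t-1})\bigr)$. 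By~(i) each such $o$ was a feasible candidate at step $t$, so the greedy choice of $s_t$ forces $h(\mathcal{S}^{t-1}\cup\{o\})-h(\mathcal{S}^{t-1})\le h(\mathcal{S}^{t})-h(\mathcal{S}^{t-1})$; with~(ii) the inner sum is at most $P\bigl(h(\mathcal{S}^{t})-h(\mathcal{S}^{t-1})\bigr)$, and summing over $t$ telescopes to $P\bigl(h(\mathcal{S}_{1:K})-h(\emptyset)\bigr)$. Chaining everything, $h(\mathcal{O})-h(\mathcal{S}_{1:K})\le P\bigl(h(\mathcal{S}_{1:K})-h(\emptyset)\bigr)$, i.e.\ $h(\mathcal{O})-h(\emptyset)\le(P+1)\bigl(h(\mathcal{S}_{1:K})-h(\emptyset)\bigr)$, which rearranges to~\eqref{ineq:opt_guar_seq_sub}. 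Polynomial running time is immediate: the greedy performs $O(|\mathcal{E}|^{2})$ evaluations of $h$ and independence checks, and membership in each $M_p$ (hence in their intersection) is decidable in polynomial time.
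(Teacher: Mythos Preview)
The paper does not prove this lemma at all: it is stated with the attribution ``[Ref.~\cite{fisher1978analysis}]'' and used as a black box in the proof of Theorem~\ref{th:alg_performance}. So there is no ``paper's proof'' to compare against; your proposal goes well beyond what the paper does by actually sketching the Fisher--Nemhauser--Wolsey argument.

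Your outline is the standard one and the submodularity telescope in your final paragraph is clean and correct. The reduction of the $K$-fold product constraint to a single intersection of $P$ matroids via direct sums is also correct and is the right way to bring the problem into the classical setting. One caution worth flagging: the specific map $\tau$ you propose for property~(i) --- ``earliest step at which $\mathcal{S}^{\bullet}\cup\{o\}$ becomes dependent in some matroid'' --- does secure~(i) but does \emph{not} by itself yield~(ii). For instance, with $P=1$ and the uniform matroid $U_{2,4}$ on $\{a,b,c,d\}$, if the greedy picks $\mathcal{S}=\{a,b\}$ and $\mathcal{O}=\{c,d\}$, then your $\tau$ sends both $c$ and $d$ to step~$2$, violating the bound~$|\tau^{-1}(t)|\le P=1$. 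A valid assignment (e.g.\ $c\mapsto 1$, $d\mapsto 2$) exists, but it must be built matroid-by-matroid via the exchange/basis-mapping lemma and then superposed, rather than read off from a single ``first-blocked'' index. You already acknowledge that marrying~(i) and~(ii) simultaneously is the crux and defer to~\cite{fisher1978analysis} for it, which is exactly what the paper does; just be aware that the heuristic $\tau$ you wrote down is not the one that works.
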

\begin{mylemma}\label{lemma:as_Problem_1.6}
\textit{Problem 1 is an instance of~\eqref{pr:general} with $P=1$.}
\end{mylemma}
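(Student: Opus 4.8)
The plan is to exhibit the triple $(\mathcal{E}_{1:K},\mathcal{C}_{1:K},h)$ that rewrites the minimization in Problem~1 as the maximization~\eqref{pr:general}, and then to read off the structural hypotheses of Lemma~\ref{lem:sub_guarantees} from the two propositions already proved. Concretely, for each $k\in[K]$ I would take $\mathcal{E}_k=\mathcal{R}_k$, the $k$-th copy of $[m]$, and let $\mathcal{C}_k=\{X\subseteq[m]:|X|\le s_k\}$ be the uniform matroid of rank $s_k$ on $[m]$; then the constraint ``$\mathcal{S}_k\cap\mathcal{E}_k\in\mathcal{C}_k$'' in~\eqref{pr:general} is verbatim the feasibility constraint $|\mathcal{S}_k|\le s_k$ of Problem~1. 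For the objective I would set $h(\mathcal{S}_{1:K})\equiv\mathbb{H}(x_{1:K})-\mathbb{H}(x_{1:K}|\mathcal{S}_{1:K})$ (equivalently $h(\mathcal{S}_{1:K})=\mathbb{I}(x_{1:K};y_{1:K})$), so that $h(\emptyset)=0$ and maximizing $h$ over the feasible schedules is the same problem as minimizing $\mathbb{H}(x_{1:K}|\mathcal{S}_{1:K})$, i.e.\ as Problem~1 (here $\mathbb{H}(x_{1:K})$ is the constant of Proposition~\ref{prop:mutual}).

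Next I would verify that each $(\mathcal{E}_k,\mathcal{C}_k)$ is a matroid, so that it is trivially the intersection of $P=1$ matroids: $\emptyset\in\mathcal{C}_k$; $\mathcal{C}_k$ is downward closed since any subset of a set of cardinality $\le s_k$ again has cardinality $\le s_k$; and the augmentation axiom holds because for $X',X\in\mathcal{C}_k$ with $|X'|<|X|\le s_k$ and any $x\in X\setminus X'$ we have $|X'\cup\{x\}|=|X'|+1\le s_k$, hence $X'\cup\{x\}\in\mathcal{C}_k$. This is the only point where the matroid axioms are invoked. The remaining hypotheses of Lemma~\ref{lem:sub_guarantees} are immediate transfers of the propositions just proved: $\mathbb{H}(x_{1:K}|\mathcal{S}_{1:K})$ is non-increasing by Proposition~\ref{prop:entropy_monoton}, so $h$ is non-decreasing; it is supermodular by Proposition~\ref{prop:seq_sub}, so $h$ is submodular; adding the constant $\mathbb{H}(x_{1:K})$ changes neither property. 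Applying Lemma~\ref{lem:sub_guarantees} with $P=1$ then gives the claim.

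I do not expect a genuine obstacle: the substance of the reduction lives in Propositions~\ref{prop:entropy_monoton} and~\ref{prop:seq_sub}, and what remains is bookkeeping. The one place that wants a little care — and which I would actually handle when deducing Part~1 of Theorem~\ref{th:alg_performance} rather than inside this lemma — is matching the denominators: Lemma~\ref{lem:sub_guarantees} states its guarantee~\eqref{ineq:opt_guar_seq_sub} with $h(\emptyset)$, whereas~\eqref{ineq:opt_guar} uses $MAX$. These agree because $\emptyset_{1:K}$ is feasible ($0\le s_k$) and, by the non-increasing property, conditioning on any nonempty schedule only lowers the entropy, so $MAX=\mathbb{H}(x_{1:K}|\emptyset_{1:K})=\mathbb{H}(x_{1:K})$; hence $h(\mathcal{O})-h(\emptyset)=MAX-OPT$ and $h(\mathcal{O})-h(\mathcal{S}_{1:K})=\mathbb{H}(x_{1:K}|\mathcal{S}_{1:K})-OPT$, turning~\eqref{ineq:opt_guar_seq_sub} into~\eqref{ineq:opt_guar} with factor $P/(1+P)=1/2$.
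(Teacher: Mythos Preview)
Your proposal is correct and follows essentially the same route as the paper: identify $\mathcal{E}_k$ with the $k$-th copy of $[m]$, take $\mathcal{C}_k$ to be the uniform matroid of rank $s_k$, and set $h$ to be (a constant shift of) $-\mathbb{H}(x_{1:K}|\mathcal{S}_{1:K})$, then invoke Propositions~\ref{prop:entropy_monoton} and~\ref{prop:seq_sub}. The only cosmetic differences are that the paper uses $h=-\mathbb{H}(x_{1:K}|\mathcal{S}_{1:K})$ rather than your $h=\mathbb{H}(x_{1:K})-\mathbb{H}(x_{1:K}|\mathcal{S}_{1:K})$ (the additive constant is immaterial), and that the paper defers the verification of the augmentation axiom to the line immediately after the lemma, whereas you fold it in; your organization is arguably cleaner.
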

\begin{proof}
We identify the instance of $\{\mathcal{E}_k,\mathcal{C}_k\}_{k\in [K]}$ and $h$, respectively, that translate~\eqref{pr:general} to Problem 1:  

Given $K$ distinct copies of $[m]$, denoted by $\mathcal{R}_1, \mathcal{R}_2, \ldots, \mathcal{R}_K$, first consider $\mathcal{E}_k=\mathcal{R}_k$ and $\mathcal{C}_k=$ $\{\mathcal{S}|\mathcal{S}\subseteq \mathcal{R}_k, |\mathcal{S}|\leq s_k\}$: $(\mathcal{E}_k, \mathcal{C}_k)$ satisfies the first two points in part 1 of Definition~\ref{def:indep_matroid}, and as a result is an independent system.  Moreover, by its definition, $\mathcal{S}_k\cap \mathcal{E}_k \in \mathcal{C}_k$ if and only if $|\mathcal{S}_k|\leq s_k$.

Second, for all $\mathcal{S}_{1:K} \preceq \mathcal{E}_{1:K}$, consider: \[h(\mathcal{S}_{1:K})=-\mathbb{H}(x_{1:K}|\mathcal{S}_{1:K}).
\]From Propositions~\ref{prop:entropy_monoton} and~\ref{prop:seq_sub}, $h(\mathcal{S}_{1:K})$ is set submodular and non-decreasing. 
In addition to Lemma~\ref{lemma:as_Problem_1.6}, the independence system $(\mathcal{E}_k, \mathcal{C}_k)$, where $\mathcal{E}_k=\mathcal{R}_k$ and $\mathcal{C}_k=\{\mathcal{S}|\mathcal{S}\subseteq \mathcal{R}_k, |\mathcal{S}|\leq s_k\}$, satisfies also the point in part 2 of Definition~\ref{def:indep_matroid}; thereby, it is also a matroid and as a result $P$, as in Lemma~\ref{lem:sub_guarantees}, is $1$.  
\end{proof}
This observation, along with Lemmas~\ref{lem:sub_guarantees} and~\ref{lemma:as_Problem_1.6} complete the proof of~\eqref{ineq:opt_guar}, since the adaptation  to Problem 1 of the greedy algorithm in~\cite[Theorem 4.1]{fisher1978analysis} results to Algorithm~\ref{alg:general}.
\end{proof}

\begin{proof}[\textit{Proof of Part 2 of Theorem~\ref{th:alg_performance}}]
Algorithm~\ref{alg:general} requires for each $k\in [K]$ the application of Algorithm~\ref{alg:greedy_alg} to~\eqref{eq:local_opt}.  In addition, each such application requires at most $s_k^2$ evaluations of $\mathbb{H}(x_{1:K}|y_{1:K})$. Therefore, Algorithm~\ref{alg:general} has time complexity $O(\sum_{k=1}^Ks_k^2T)$.
\end{proof}
The proof of Theorem \ref{th:alg_performance} is complete. 
\end{proof}

\section*{Appendix D:  Proof of Theorem~\ref{th:time_compl_evaluation}}\label{app:time_compl_evaluation}

\paragraph*{Notations} We introduce five notations: first, $S_{1:K}$ is the block diagonal matrix with diagonal elements the sensor selection matrices $S_1, S_2, \ldots, S_K$; second, $c(x_{1:K})$ is the batch vector $[(S_1g(x_1))^\top,(S_2g(x_2))^\top, \ldots, (S_Kg(x_K))^\top]^\top$, where $g(x_k)\equiv (g_1(x_k),g_2(x_k), \ldots, g_m(x_k))^\top$; third, $C(x_{1:K})$ is the block diagonal matrix with diagonal elements the matrices $S_1C_1, S_2C_2, \ldots, S_KC_K$, where $C_k\equiv G(x_k)$ and $G(x(t))\equiv \partial g(x(t))/\partial x(t)$; fourth, $v_k$ is the batch measurement noise vector $(v_{1,k}^\top,v_{2,k}^\top, \ldots, v_{m,k}^\top)^\top$; fifth, $\mu_{1:K} \equiv (\mu(t_1)^\top, \mu(t_2)^\top,\ldots,\mu(t_K)^\top)^\top$.

\begin{proof} 
We first derive the two formulas for $\mathbb{H}(x_{1:K}|y_{1:K})$:  the first formula is expressed in terms of $\Sigma(x_{1:K})^{-1}$, and the second formula is expressed in terms of $\Sigma(x_{1:K})$. 
 
\begin{mylemma}[Formula of $\mathbb{H}(x_{1:K}|y_{1:K})$ in terms of $\Sigma(x_{1:K})^{-1}$]\label{lem:entropy_closed_inv_sigma}
Consider the start of the $k$-th iteration in Algorithm~\ref{alg:general}.   Given the measurements $y_{1:(k-1)}$ up to $k$, $\mathbb{H}(x_{1:K}|y_{1:K})$ is given by $-T_1'+nK\log(2\pi e)/2$, where:
\begin{align*}
T_1'&\equiv \frac{1}{2}\log\det(\Xi+\Sigma(x_{1:K})^{-1})\\
\Xi&\equiv C(\tilde{\mu}_{1:K})^\top S_{1:K}\Sigma(v_{1:K})^{-1}S_{1:K}^\top C(\tilde{\mu}_{1:K})
\end{align*}
and $\tilde{\mu}_{1:K}$ is the maximum a posteriori (MAP) estimate of $x_{1:K}$ given the measurements $y_{1:(k-1)}$ up to $k$.
\end{mylemma}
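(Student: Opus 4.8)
The plan is to reduce the lemma to the Gaussian entropy formula of Lemma~\ref{lem:entropy_gaussian} (equivalently to Proposition~\ref{prop:entropy}) by showing that, under the MAP-linearization of the nonlinear measurement maps used at each iteration, the posterior of $x_{1:K}$ given $y_{1:K}$ is Gaussian with inverse covariance $\Xi+\Sigma(x_{1:K})^{-1}$, and then reading off the entropy as $\tfrac12\log[(2\pi e)^{nK}\det(\Sigma(x^\star_{1:K}))]$.

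First I would fix the $k$-th iteration of Algorithm~\ref{alg:general}: the measurements $y_{1:(k-1)}$ are already committed and $\tilde\mu_{1:K}$ is the MAP estimate of $x_{1:K}$ given them. Linearizing each $g_i$ about the corresponding block of $\tilde\mu_{1:K}$ by a first-order Taylor expansion (Assumption~\ref{ass:indep} guarantees $g_i$ is differentiable), the stacked, sensor-selected measurement model becomes affine in $x_{1:K}$, namely $y_{1:K}=c(\tilde\mu_{1:K})+C(\tilde\mu_{1:K})(x_{1:K}-\tilde\mu_{1:K})+S_{1:K}v_{1:K}$, with $c(\cdot)$, $C(\cdot)$, $S_{1:K}$ as in the ``Notations'' paragraph. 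Since $x_{1:K}\sim\mathcal{N}(\mu_{1:K},\Sigma(x_{1:K}))$ and $v_{1:K}$ is Gaussian and independent of $x_{1:K}$ (Assumption~\ref{ass:indep}), the pair $(x_{1:K},y_{1:K})$ is jointly Gaussian, so — exactly as in Appendix~A — the conditional law of $x_{1:K}$ given $y_{1:K}$ is Gaussian; denote its covariance $\Sigma(x^\star_{1:K})$.

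Next I would compute $\Sigma(x^\star_{1:K})$ in information form. Collecting the terms quadratic in $x_{1:K}$ in the negative log posterior, the prior contributes $\Sigma(x_{1:K})^{-1}$ and the linearized likelihood contributes $C(\tilde\mu_{1:K})^\top(S_{1:K}\Sigma(v_{1:K})S_{1:K}^\top)^{-1}C(\tilde\mu_{1:K})$, which equals $\Xi$ because block-diagonality of $\Sigma(v_{1:K})$ — a consequence of the independence of the noises — lets the selection matrices pass through the inverse; hence $\Sigma(x^\star_{1:K})^{-1}=\Xi+\Sigma(x_{1:K})^{-1}$ (equivalently one applies the matrix inversion lemma to the Schur-complement form $\Sigma(x_{1:K})-\Sigma(x_{1:K})C^\top(C\Sigma(x_{1:K})C^\top+S_{1:K}\Sigma(v_{1:K})S_{1:K}^\top)^{-1}C\Sigma(x_{1:K})$, writing $C\equiv C(\tilde\mu_{1:K})$). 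Applying Lemma~\ref{lem:entropy_gaussian} to $x_{1:K}\,|\,y_{1:K}\sim\mathcal{N}(\cdot,\Sigma(x^\star_{1:K}))$ and using $\log\det(\Sigma(x^\star_{1:K}))=-\log\det(\Xi+\Sigma(x_{1:K})^{-1})=-2T_1'$ then gives $\mathbb{H}(x_{1:K}|y_{1:K})=\tfrac12 nK\log(2\pi e)+\tfrac12\log\det(\Sigma(x^\star_{1:K}))=-T_1'+nK\log(2\pi e)/2$, which also supplies the formula for $\Sigma(x^\star_{1:K})$ promised at the close of Appendix~A.

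The main obstacle is not the algebra — the information-form update and the determinant identity are routine — but making the linearization step precise: stating in exactly what sense ``$\mathbb{H}(x_{1:K}|y_{1:K})$ is given by'' the displayed expression, i.e.\ that during each iteration we adopt the MAP-linearized (batch/Gauss--Newton) measurement model of~\cite{anderson2015batch}, exact in the linear case and the standard approximation otherwise. A secondary point needing care is bookkeeping the selection matrices $S_{1:K}$, so that only the sensors in $\mathcal{S}_{1:K}$ enter $\Xi$ — this is precisely where the dependence of $\mathbb{H}(x_{1:K}|\mathcal{S}_{1:K})$ on the chosen sensor sets resides.
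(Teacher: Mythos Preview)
Your proposal is correct and follows essentially the same approach as the paper: linearize the measurement model about the MAP estimate $\tilde\mu_{1:K}$, use joint Gaussianity of $(x_{1:K},y_{1:K})$ to identify $\Sigma(x^\star_{1:K})$, reduce it via the Woodbury identity (together with $(S_{1:K}\Sigma(v_{1:K})S_{1:K}^\top)^{-1}=S_{1:K}\Sigma(v_{1:K})^{-1}S_{1:K}^\top$ from block-diagonality) to $(\Xi+\Sigma(x_{1:K})^{-1})^{-1}$, and then apply the Gaussian entropy formula~\eqref{eq:entropy_closed_app}. The only cosmetic difference is that the paper starts from the Schur-complement expression for $\Sigma(x^\star_{1:K})$ and then invokes Woodbury, whereas you lead with the information-form argument and mention the Schur/Woodbury route as the equivalent alternative.
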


\begin{proof}
Before the $k$-th iteration of Step 1 in Algorithm~\ref{alg:general}, we first compute $\tilde{\mu}_{1:K}$ given $y_{1:(k-1)}$ up to $k$.  This step has complexity $O(n^{2.4}K)$ when $\Sigma(x_{1:K})^{-1}$ is sparse~\cite[Eq.~(5)]{anderson2015batch}~\cite[Section 3.8]{quarteroni2010numerical}, and it does not increase the total time complexity of Algorithm~\ref{alg:general}.  Next, given $\tilde{\mu}_{1:K}$, we linearise our measurement model over $\tilde{\mu}_{1:K}$, and compute $C(\tilde{\mu}_{1:K})$.  Then, we continue as follows: $x_{1:K}$ and $y_{1:K}$ are jointly Gaussian:
\begin{align*}
(x_{1:K},y_{1:K})&\sim \mathcal{N}\left(\mathbb{E}(x_{1:K},y_{1:K}),\Sigma(x_{1:K},y_{1:K})\right),\\
\mathbb{E}(x_{1:K},y_{1:K})&=(\mu_{1:K},c(\mu_{1:K}))\\
\Sigma(x_{1:K},y_{1:K})&=\left[\begin{array}{cc}
\Sigma(x_{1:K})& \Sigma(x_{1:K})C(\tilde{\mu}_{1:K})^\top\\
 C(\tilde{\mu}_{1:K})\Sigma(x_{1:K})&  \Sigma(y_{1:K})
\end{array}\right],
\end{align*}
where:
\begin{align*}
\Sigma(y_{1:K})&=S_{1:K}\Sigma(v_{1:K})S_{1:K}^\top+C(\tilde{\mu}_{1:K})\Sigma(x_{1:K})C(\tilde{\mu}_{1:K})^\top.
\end{align*}

Therefore, the conditional probability distribution of $x_{1:K}$ given $y_{1:K}$ has covariance $\Sigma(x^{\star}_{1:K})$ (using our notation in Proposition~\ref{prop:entropy}) such that:
\begin{equation*}
\Sigma(x^{\star}_{1:K})= \Sigma(x_{1:K})-\Sigma(x_{1:K})C(\tilde{\mu}_{1:K})^\top \Phi C(\tilde{\mu}_{1:K})\Sigma(x_{1:K}),
\end{equation*}
where: 
\begin{equation*}
\Phi\equiv C(\tilde{\mu}_{1:K})\Sigma(x_{1:K})C(\tilde{\mu}_{1:K})^\top+S_{1:K}\Sigma(v_{1:K})S_{1:K}^\top)^{-1}.
\end{equation*}
Using the Woodbury matrix identity~\cite[Corollary 2.8.8]{bernstein2009matrix}:
\begin{equation*}
\Sigma(x^{\star}_{1:K})= (\Xi+\Sigma(x_{1:K})^{-1})^{-1},
\end{equation*}
where we also used the $(S_{1:K}\Sigma(v_{1:K})S_{1:K}^\top)^{-1}=S_{1:K}$ $\Sigma(v_{1:K})^{-1}S_{1:K}^\top$, which holds since $S_{1:K}$ and $\Sigma(v_{1:K})$ are block diagonal.  Using~\eqref{eq:entropy_closed_app} the proof is complete.
\end{proof}

\begin{myremark}
The time complexity for the evaluation of $\mathbb{H}(x_{1:K}|y_{1:K})$ using Lemma~\ref{lem:entropy_closed_inv_sigma} is decided by the sparsity of $\Sigma(x_{1:K})^{-1}$ since the rest of the matrices are block diagonal.
\end{myremark}

\begin{mylemma}[Formula of $\mathbb{H}(x_{1:K}|y_{1:K})$ in terms of $\Sigma(x_{1:K})$]\label{lem:entropy_closed_Sigma}
Consider the start of the $k$-th iteration in Algorithm~\ref{alg:general}. Given the measurements $y_{1:(k-1)}$ up to $k$, $\mathbb{H}(x_{1:K}|y_{1:K})$ is given by $\mathbb{H}(x_{1:K}|y_{1:K})=T_1-T_2+\mathbb{H}(x_{1:K})$,
where:
\begin{align}
T_1&\equiv \frac{1}{2}\sum_{k=1}^K\log[(2\pi e)^{s_k}\det(S_k\Sigma(v_k)S_k^\top)]\label{aux_3_app}\\
T_2&\equiv \frac{1}{2}\log[(2\pi e)^{\sum_{k=1}^Ks_k}\det(\Sigma(y_{1:K}))]\label{aux_4_app}\\
\Sigma(y_{1:K})&=S_{1:K}\Sigma(v_{1:K})S_{1:K}^\top+C(\tilde{\mu}_{1:K})\Sigma(x_{1:K})C(\tilde{\mu}_{1:K})^\top,\nonumber
\end{align}
and $\tilde{\mu}_{1:K}$ is the maximum a posteriori (MAP) estimate of $x_{1:K}$ given the measurements $y_{1:(k-1)}$ up to $k$.
\end{mylemma}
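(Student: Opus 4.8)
The plan is to split $\mathbb{H}(x_{1:K}|y_{1:K})$ by the standard information identities and evaluate each resulting term using Gaussianity. Concretely, combining Proposition~\ref{prop:mutual} with the elementary identity $\mathbb{I}(x_{1:K};y_{1:K})=\mathbb{H}(y_{1:K})-\mathbb{H}(y_{1:K}|x_{1:K})$ gives
\[
\mathbb{H}(x_{1:K}|y_{1:K})=\mathbb{H}(x_{1:K})-\mathbb{H}(y_{1:K})+\mathbb{H}(y_{1:K}|x_{1:K}).
\]
Hence it suffices to show $\mathbb{H}(y_{1:K})=T_2$ and $\mathbb{H}(y_{1:K}|x_{1:K})=T_1$, after which substitution into the display yields the claimed $\mathbb{H}(x_{1:K}|y_{1:K})=T_1-T_2+\mathbb{H}(x_{1:K})$.

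For the term $\mathbb{H}(y_{1:K})$, I reuse the linearization setup already established in the proof of Lemma~\ref{lem:entropy_closed_inv_sigma}: after linearizing the measurement model about the MAP estimate $\tilde{\mu}_{1:K}$, the pair $(x_{1:K},y_{1:K})$ is jointly Gaussian, so in particular $y_{1:K}$ is Gaussian with covariance $\Sigma(y_{1:K})=S_{1:K}\Sigma(v_{1:K})S_{1:K}^\top+C(\tilde{\mu}_{1:K})\Sigma(x_{1:K})C(\tilde{\mu}_{1:K})^\top$ and ambient dimension $\sum_{k=1}^K s_k$. Applying Lemma~\ref{lem:entropy_gaussian} then gives $\mathbb{H}(y_{1:K})=\tfrac12\log[(2\pi e)^{\sum_{k=1}^K s_k}\det(\Sigma(y_{1:K}))]=T_2$ directly.

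For the conditional term, fix a realization $x_{1:K}=x_{1:K}'$. Then the batch measurement at time $k$ is $y_k=S_k g(x_k')+S_k v_k$, i.e.\ a Gaussian with mean $S_k g(x_k')$ and covariance $S_k\Sigma(v_k)S_k^\top$, and by Assumption~\ref{ass:indep} the noises, hence the vectors $y_1,\dots,y_K$, are independent given $x_{1:K}'$. Therefore $\mathbb{H}(y_{1:K}|x_{1:K}=x_{1:K}')=\sum_{k=1}^K\mathbb{H}(y_k|x_{1:K}=x_{1:K}')=\sum_{k=1}^K\tfrac12\log[(2\pi e)^{s_k}\det(S_k\Sigma(v_k)S_k^\top)]$ by Lemma~\ref{lem:entropy_gaussian}. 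Since this value does not depend on $x_{1:K}'$, taking the expectation over $x_{1:K}$ leaves it unchanged, so $\mathbb{H}(y_{1:K}|x_{1:K})=T_1$, completing the argument.

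The main obstacle is not the algebra but making the linearization rigorous: one must justify that replacing $g$ by its first-order expansion at $\tilde{\mu}_{1:K}$ (legitimate under Assumption~\ref{ass:indep}, which supplies differentiability of each $g_i$) produces exactly the jointly-Gaussian model with the stated $\Sigma(y_{1:K})$ and with the block-diagonal conditional noise covariance $S_k\Sigma(v_k)S_k^\top$, and that conditioning on $y_{1:(k-1)}$ at the start of iteration $k$ is consistent with this construction. This is precisely the reasoning carried out in the proof of Lemma~\ref{lem:entropy_closed_inv_sigma}, so it can be invoked here rather than repeated; the remaining steps — the entropy chain rule and the per-block Gaussian entropy evaluation — are routine.
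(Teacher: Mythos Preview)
Your proposal is correct and follows essentially the same route as the paper: both decompose $\mathbb{H}(x_{1:K}\mid y_{1:K})$ into $\mathbb{H}(y_{1:K}\mid x_{1:K})-\mathbb{H}(y_{1:K})+\mathbb{H}(x_{1:K})$ (you via Proposition~\ref{prop:mutual} plus the mutual-information identity, the paper via the entropy chain rule directly), then evaluate the conditional term blockwise using Assumption~\ref{ass:indep} and Lemma~\ref{lem:entropy_gaussian}, and the marginal term via the Gaussian model obtained from linearizing about $\tilde{\mu}_{1:K}$. The only cosmetic difference is that you invoke the joint-Gaussian setup from Lemma~\ref{lem:entropy_closed_inv_sigma}, whereas the paper rederives the marginal of $y_{1:K}$ explicitly.
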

\begin{proof}
Before the $k$-th iteration of Step 1 in Algorithm~\ref{alg:general}, we first compute $\tilde{\mu}_{1:K}$ given $y_{1:(k-1)}$ up to $k$.  This step has complexity $O(n^{2.4}K)$ when: a) $\Sigma(x_{1:K})$ is sparse~\cite[Eq.~(5) after multiplying its both sides with $\Sigma(x_{1:K})$]{anderson2015batch}; b) certain invertibility conditions apply~\cite[Section 3.8]{quarteroni2010numerical}.  In this case, this step does not increase the total time complexity of Algorithm~\ref{alg:general}.  If the invertibility conditions in~\cite[Section 3.8]{quarteroni2010numerical} do not apply, the complexity of this computation is $O(n^{2.4}K^{2.4})$.  In this case, we can use $\mu_{1:K}$, instead of $\tilde{\mu}_{1:K}$, to evaluate $\mathbb{H}(x_{1:K}|y_{1:K})$, and keep the overall complexity of Algorithm~\ref{alg:general} to $O(n^{2.4}K)$. Next, given $\tilde{\mu}_{1:K}$, we linearise our measurement model over $\tilde{\mu}_{1:K}$, and compute $C(\tilde{\mu}_{1:K})$.  Then, we continue as follows: the chain rule for conditional entropies implies~\cite{cover2012elements}: $\mathbb{H}(x_{1:K}|y_{1:K})=\mathbb{H}(y_{1:K}|x_{1:K})-\mathbb{H}(y_{1:K})+\mathbb{H}(x_{1:K})$.  Thus, we derive a closed formula for $\mathbb{H}(y_{1:K}|x_{1:K})$ and $\mathbb{H}(y_{1:K})$:

\paragraph*{Closed form of $\mathbb{H}(y_{1:K}|x_{1:K})$} The chain rule for conditional entropies implies~\cite{cover2012elements}:
\begin{align}
\mathbb{H}(y_{1:K}|x_{1:K})&=\sum_{k=1}^K\mathbb{H}(y(t_k)|x_{1:K},y_{1:k-1})\label{aux_1_app}\\
&=\sum_{k=1}^K\mathbb{E}_{x(t_{k'})}(\mathbb{H}(y(t_k)|x(t_k)=x(t_{k'})))\label{aux_2_app}.
\end{align}
Eq.~\eqref{aux_2_app} follows from~\eqref{aux_1_app} because given $x(t_k)$ $y(t_k)$ is independent of $y_{1:k-1}$, $x_{1:(k-1)}$ and $x_{(k+1):K}$.  In addition,~\eqref{aux_3_app} follows from~\eqref{aux_2_app} because $y(t_k)|x(t_k) \sim \mathcal{N}(S_kg(x(t_k)), S_k\Sigma(v_k)S_k^\top))$ and, thus, Lemma~\ref{lem:entropy_gaussian}~applies.

\paragraph*{Closed form of $\mathbb{H}(y_{1:K})$} To this end, we derive the marginal distribution of $y_{1:K}$, denoted by $f(y_{1:K})$:
\begin{align*}
f(y_{1:K})&=\int f(y_{1:K},x_{1:K})dx_{1:K}\\
&=\int f(y_{1:K}|x_{1:K})f(x_{1:K})dx_{1:K},
\end{align*}
where $f(x_{1:K})$ denotes the probability distribution of $x_{1:K}$.  In~particular:
\begin{align*}
y_{1:K}|x_{1:K} &\sim \mathcal{N}(c(x_{1:K}), S_{1:K}\Sigma(v_{1:K})S_{1:K}^\top)\\
x_{1:K} &\sim \mathcal{N}(\mu_{1:K}, \Sigma(x_{1:K})).
\end{align*}
Therefore, the best Gaussian approximation to the marginal distribution of $y_{1:K}$ is:
\begin{align*}
y_{1:K} &\sim \mathcal{N}(\mathbb{E}(y_{1:K}), \Sigma(y_{1:K}))\\
\mathbb{E}(y_{1:K})&=c(\mu_{1:K})\\
\Sigma(y_{1:K})&= S_{1:K}\Sigma(v_{1:K})S_{1:K}^\top+C(\tilde{\mu}_{1:K})\Sigma(x_{1:K})C(\tilde{\mu}_{1:K})^\top.
\end{align*}
Thus, from Lemma~\ref{lem:entropy_gaussian},~\eqref{aux_4_app} follows.
\end{proof}

\begin{myremark}
The time complexity for the evaluation of $\mathbb{H}(x_{1:K}|y_{1:K})$ using Lemma~\ref{lem:entropy_closed_Sigma} is decided by the sparsity of $\Sigma(x_{1:K})$ since the rest of the matrices are block diagonal. 
\end{myremark}

We complete the proof for each case of Theorem \ref{th:time_compl_evaluation}:

\begin{itemize}
\item \emph{Time complexity of each evaluation of $\mathbb{H}(x_{1:K}|y_{1:K})$ when either $\Sigma(x_{1:K})$ or $\Sigma(x_{1:K})^{-1}$ is exactly sparse (that is, block tri-diagonal)}:  We present the proof only for the case where $\Sigma(x_{1:K})^{-1}$ is exactly sparse since the proof for the case where $\Sigma(x_{1:K})$ is exactly sparse is similar.  In particular, consider the formula of $\mathbb{H}(x_{1:K}|y_{1:K})$ in Lemma~\ref{lem:entropy_closed_inv_sigma}: $T_1'$ involves the log determinant of a matrix that is the sum of two $nK\times nK$ sparse matrices: the first matrix is block diagonal, and the second one is block tri-diagonal.  The block diagonal matrix is evaluated in $O(n^{2.4}K)$ time, since the determinant of an $n \times n$ matrix is computed in $O(n^{2.4})$ time using the Coppersmith-Winograd algorithm~\cite{coppersmith1987matrix}.  Then, $T_1'$ is evaluated in $O(n^{2.4}K)$~\cite[Theorem~2]{molinari2008determinants}.

\item \emph{Time complexity of each evaluation of $\mathbb{H}(x_{1:K}|y_{1:K})$ when both $\Sigma(x_{1:K})$ and $\Sigma(x_{1:K})^{-1}$ are dense}:  In this case, $T_1'$ (and similarly $T_2$ in Lemma~\ref{lem:entropy_closed_Sigma}) is the log determinant of a dense $nK \times nK$ matrix.  Therefore, it is evaluated in $O((nK)^{2.4})$ time, since the determinant of an $n \times n$ matrix is computed in $O(n^{2.4})$ time using the Coppersmith-Winograd algorithm~\cite{coppersmith1987matrix}.\qedhere
\end{itemize}
\end{proof}

\bibliographystyle{IEEEtran}
\bibliography{references}

\end{document}